\documentclass[a4paper,UKenglish]{lipics-v2018}
\hideLIPIcs 

\usepackage{microtype}
\usepackage{microtype}
\usepackage{xspace}
\usepackage{float}

\usepackage{amsmath}
\usepackage{enumitem}

\usepackage{graphicx}
\usepackage{amsfonts}
\usepackage{amsmath}
\usepackage{hyperref}

\newcommand{\M}{\mathbb{M}}
\newcommand{\comp}{\mathrm{comp}}
\renewcommand{\S}{\EuScript{S}}
\newcommand{\R}{\EuScript{R}}
\newcommand{\ms}{M^*}
\newcommand{\opt}{\textsc{Opt}}
\newcommand{\dist}{d}

\newcommand{\wt}{w}
\newcommand{\C}{\EuScript{C}}
\newcommand{\ignore}[1]{}

\newcommand{\Mk}{\EuScript{M}}
\newcommand{\ints}{\mathscr{L}}
\newcommand{\eps}{\varepsilon}

\newcommand{\dir}[1]{\overrightarrow{#1}}

\newcommand{\level}{lev}
\newcommand{\ym}[1]{y_{max}^{#1}}

\newcommand{\sr}{\textrm{csr}}

\newcommand{\appr}{1+\frac{1}{32t}}
\newcommand{\evol}{\EuScript{E}}



\bibliographystyle{plainurl}

\title{Optimal Analysis of an Online Algorithm for the  Bipartite Matching Problem on a Line}

\titlerunning{Online Matching on a Line}

\author{Sharath Raghvendra}{Virginia Tech\\{Blacksburg, USA}}{sharathr@vt.edu}{}{}

\authorrunning{S. Raghvendra}

\Copyright{Sharath Raghvendra}

\subjclass{Theory of Computation $\rightarrow$ Design and Analysis of Algorithms $\rightarrow$ Online Algorithms}

\keywords{Bipartite Matching, Online Algorithms, Adversarial Model, Line Metric}

\category{}

\relatedversion{}

\supplement{}

\funding{This work is supported by a NSF CRII grant NSF-CCF 1464276}


\EventEditors{Bettina Speckmann and Csaba D. T{\'o}th}
\EventNoEds{2}
\EventLongTitle{34th International Symposium on Computational Geometry (SoCG 2018)}
\EventShortTitle{SoCG 2018}
\EventAcronym{SoCG}
\EventYear{2018}
\EventDate{June 11--14, 2018}
\EventLocation{Budapest, Hungary}
\EventLogo{socg-logo} 
\SeriesVolume{99}
\ArticleNo{67} 
\nolinenumbers

\begin{document}

\maketitle
\begin{abstract}
In the online metric bipartite matching problem, we are given a set $S$ of server locations in a metric space. Requests arrive one at a time, and on its arrival, we need to immediately and irrevocably match it to a server at a cost which is equal to the distance between these locations. A $\alpha$-competitive algorithm will assign requests to servers so that the total cost is at most $\alpha$ times the cost of $M_{\opt}$ where $M_{\opt}$ is the minimum cost matching between $S$ and $R$. 

We consider this problem in the adversarial model for the case where $S$ and $R$ are points on a line and $|S|=|R|=n$. We improve the analysis of the deterministic Robust Matching Algorithm (RM-Algorithm, Nayyar and Raghvendra FOCS'17) from $O(\log^2 n)$ to an optimal $\Theta(\log n)$. Previously, only a randomized algorithm under a weaker oblivious adversary  achieved a competitive ratio of $O(\log n)$ (Gupta and Lewi, ICALP'12). The well-known Work Function Algorithm (WFA) has a competitive ratio of $O(n)$ and $\Omega(\log n)$ for this problem. Therefore,  WFA cannot achieve an asymptotically better competitive ratio than the RM-Algorithm.
 \end{abstract}

\section{Introduction}

Driven by consumers' demand for  quick access to goods and services, business ventures schedule their delivery  in real-time, often without the complete knowledge of the future request locations or their order of arrival. Due to this lack of complete information, decisions made tend to be sub-optimal. Therefore, there is a need for competitive \emph{online algorithms} which immediately and irrevocably allocate resources to requests in real-time by incurring a small cost.

Motivated by these real-time delivery problems, we study the problem of computing the online metric bipartite matching of requests to servers.  Consider servers placed in a metric space where each server has a capacity that restricts how many requests it can serve. When a new request arrives, one of the servers with positive capacity is matched to this request. After this request is served, the capacity of the server reduces by one. We assume that the cost associated with this assignment is a metric cost; for instance, it could be the minimum distance traveled by the server to reach the request.

The case where the capacity of every server is $\infty$ is the celebrated $k$-server problem. The case where every server has a capacity of $1$ is the \emph{online metric bipartite matching problem}. In this case, the requests arrive one at a time, we have to immediately and irrevocably match it to some unmatched server. The resulting assignment is a matching and is referred to as an \emph{online matching}. An optimal assignment is impossible since an adversary can easily fill up the remaining locations of requests in $R$ in a way that our current assignment becomes sub-optimal. Therefore, we want our algorithm to compute an assignment that is competitive with respect to the optimal matching. For any input $S, R$ and any arrival order of requests in $R$, we say our algorithm is \emph{$\alpha$-competitive}, for $\alpha > 1$, when the cost of the online matching $M$ is at most $\alpha$ times the minimum cost, i.e., 
$$w(M) \le \alpha w(M_{\opt}).$$
Here $M_{\opt}$ is the minimum-cost matching of the locations in $S$ and $R$.
In the above discussion, note the role of the adversary. In the \emph{adversarial model}, the adversary knows the server locations and the assignments made by the algorithm and generates a sequence to maximize $\alpha$. In the weaker oblivious adversary model, the adversary knows the randomized algorithm but does not know the  random choices made by the algorithm. In this paper, we consider the online metric bipartite matching problem in the adversarial model and where  $S$ and $R$ are points on a line.

Consider the adversarial model. For any request, the \emph{greedy heuristic} simply assigns the closest unmatched server to it. The greedy heuristic, even for the line metric, is only $2^n$-competitive~\cite{gl_icalp12} for the online matching problem.     The well-known Work Function Algorithm  (WFA) chooses a server that minimizes the sum of the \emph{greedy cost} and the so-called \emph{retrospective cost}. For the $k$-server problem, the competitive ratio of the WFA\ is $2k-1~\cite{kp_jacm}$ which is near optimal with a lower bound of $k$ on the competitive ratio of any algorithm in any metric space that has at least $k+1$ points~\cite{mmd_joa}. 

In the context of online metric matching problem,  there are algorithms that achieve a  competitive ratio of $2n-1$ in the adversarial model~\cite{r_approx16, kmv_tcs, kp_jalg}.
This competitive ratio is worst-case optimal, i.e., there exists a metric space where we cannot do better. However, for Euclidean metric, for a long time,  there was a stark contrast between the upper bound of $2n-1$ and the lower bound of $\Omega(n^{1-1/d })$. Consequently, significant effort has been made to study the performance of online algorithms in special metric spaces, especially the  line metric. For example, for the line metric, it has been shown that the WFA when applied to the online matching problem has a lower bound of $\Omega(\log n)$ and an upper bound of $O(n)$~\cite{kn2004}; see also~\cite{Antoniadis2015} for a $O(n^{0.585})$competitive algorithm for the line metric. In the oblivious adversary model, there is an $O(\log n)$-competitive~\cite{gl_icalp12} algorithm for the line metric.  There is also an $O(d\log n)$-competitive algorithm in the oblivious adversary model for any metric space with doubling dimension $d$~\cite{gl_icalp12}. 

Only recently, for any metric space and for the adversarial model, Raghvendra and Nayyar~\cite{nr_focs17} provided a bound of $O(\mu(S)\log ^2 n)  $ on the competitive ratio of the RM-Algorithm  -- an algorithm that was  introduced by Raghvendra~\cite{r_approx16}; here $\mu(S)$ is the worst case ratio of the the TSP and diameter of any positive diameter subset of $S$.  There is a simple lower bound of $\Omega(\mu(S))$ on the competitive ratio of any algorithm for this problem. Therefore, RM-Algorithm is near-optimal for every input set $S$. When $S$ is a set of points on a line, $\mu(S) = 2$ and therefore, their analysis bounds the competitive ratio of the RM-Algorithm by $O(\log^2 n)$ for the line metric and $O(n^{1-1/d}\log^2 n)$ for any $d$-dimensional Euclidean metric.   Furthermore,  RM-Algorithm also has a lower bound of $\Omega(\log n)$ on its competitive ratio for the line metric. In this paper, we provide a different analysis  and show that the RM-Algorithm is $\Theta(\log n)$-competitive.  

\subparagraph*{Overview of RM-Algorithm:} At a high-level, the RM-Algorithm maintains two matchings $M$ and $M^*$, both of which match requests seen so far to the same subset of servers in $S$. We refer to $M$\ as the online matching and $M^*$ as the offline matching. For a parameter $t > 1$ chosen by the algorithm, the offline matching $M^*$ is a $t$-approximate minimum-cost matching satisfying a set of  relaxed dual feasibility conditions of the  linear program for the minimum-cost matching; here each constraint relaxed by a multiplicative factor $t$. Note that, when $t=1$, the matching $M^*$ is simply the minimum-cost matching.  

 When the $i^{th}$  request $r_i$ arrives, the algorithm  computes an augmenting path $P_{i}$ with the minimum ``cost'' for an appropriately defined cost function. This path $P_i$  starts at $r_i$ and ends at an unmatched server $s_i$.  The algorithm then augments the offline matching $M^*$ along $P$ whereas the online matching $M$ will simply match the two  endpoints of $P_{i}$. Note that $M$ and $M^*$ will always match requests  to the same subset of servers. 
We refer to the steps taken by the algorithm to process request $r_i$  as the $i^{th}$ phase of the algorithm. For $t >1$, it has been shown in~\cite{r_approx16} that the sum of the costs of every augmenting path computed by the algorithm is bounds  the online matching cost from above.  Nayyar and Raghvendra~\cite{nr_focs17}  use this property and bounded the ratio of the sum of the costs of augmenting paths to the cost of the optimal matching. In order to accomplish this they associate every request $r_i$ to  the cost of $P_i$. To bound the sum of costs of all the augmenting paths, they  partition the requests into   $\Theta(\log^2 n)$ groups and within each group they bound this ratio  by $\mu(S)$ (which is a constant when $S$ is a set of points on a line). For the line metric, each group can, in the worst-case, have a ratio of $\Theta(1)$. However,  not all groups can simultaneously exhibit this worst-case behavior. In order to improve the competitive ratio from  $O(\log^2 n)$ to $O(\log n)$, therefore, one has to bound the combined ratios of several groups by a constant making the analysis challenging. 
\subsection{Our Results and Techniques}
In this paper, we show that when the points in $S\cup R$ are  on a line, the RM-Algorithm achieves a competitive ratio of $O(\log n)$.  Our analysis is tight as there is an  example in the line metric for which the RM-Algorithm produces an online matching which is $\Theta(\log n)$ times the  optimal cost. We achieve this improvement using the following new ideas: 
\begin{itemize}
\item  First, we establish the the \emph{ANFS-property} of the RM-Algorithm (Section~\ref{sec:nn}). We show that many requests are matched to an approximately nearest free server (ANFS) by the algorithm. We define certain edges of the online matching $M$ as \emph{short} edges and show that every short edge matches the request to an approximately nearest free server. Let $M_H$ be the set of short edges of the online matching and $M_L=M\setminus M_{H}$ be the \emph{long} edges. We also show that when $t=3$, the total cost of the short edges $w(M_H)\ge \wt(M)/6$ and therefore, the cost of the long edges is $\wt(M_L) < (5/6)\wt(M)$.   

\item  For every point in $S\cup R$, the RM-Algorithm maintains a dual weight (Section~\ref{sec:algorithm}). For  our analysis in the line metric, we assign an interval to every request. The length of this interval  is determined by the dual weight of the request. At the start of phase $i$,  let $\sigma_{i-1}$ be the union of all such intervals. By its construction, $\sigma_{i-1}$ will consist of a set of interior-disjoint intervals. While processing request $r_i$, the RM-Algorithm conducts a series of dual adjustments and a  subset of requests (referred to as $B_i$) undergo an increase in dual weights. After the dual adjustments, the union of the intervals for requests in $B_i$ forms a single interval and these increases conclude in the discovery of the minimum cost augmenting path.  Therefore, after phase $i$,  intervals in $\sigma_{i-1}$ may grow  and combine to form a single interval $\EuScript{I}$ in $\sigma_i$. Furthermore, the new online edge $(s_i,r_i)$ is also contained inside this newly formed interval $\EuScript{I}$ (Section~\ref{sec:dual}). Based on the length of the interval $\EuScript{I}$, we assign one of $O(\log n)$ levels to the edge $(s_i,r_i)$. This partitions all the online edges in $O(\log n)$ levels. 

\item\ The online edges of any given level $k$ can be expressed as several non-overlapping well-aligned matching of a well separated input (Section~\ref{sec:1danalysis}) . We establish properties of such matchings (Section~\ref{sec:wam} and Figure~\ref{fig:wam}) and use it to  bound the total cost of the ``short'' online edges of level $k$ by the sum of $\wt(M_{\opt})$ and $\gamma$ times the cost of the long edges of level $k$, where $\gamma$ is a small positive constant (Section~\ref{sec:1danalysis}). Adding across the $O(\log n)$ levels, we get $\wt(M_H) \le O(\log n)\wt(M_{\opt}) + \gamma\wt(M_L)$. Using the ANFS-property of short and long edges, we immediately get $(1/6-5\gamma/6)\wt(M) \le O(\log n)\wt(M_{\opt})$. For a sufficiently small $\gamma$, we bound the competitive ratio, i.e., $\wt(M)/\wt(M_{\opt})$ by $O(\log n)$.  
  
\end{itemize}

\subparagraph*{Organization:}   
The rest of the paper is organized as follows. We begin by presenting (in Section~\ref{sec:algorithm}) the RM-Algorithm and some of its use properties as shown in~\cite{r_approx16}. For the analysis, we establish the \emph{ANFS-property} in Section~\ref{sec:nn}. After that, we will (in Section~\ref{sec:wam}) introduce well aligned matchings of well-separated inputs on a line.  Then, in Section~\ref{sec:dual}, we interpret the dual weight maintained for each request as an interval and study the properties of the union of these intervals.  Using these properties (in Section~\ref{sec:1danalysis}) along with the $\textrm{ANFS}$-property of the algorithm, we will  establish a bound of $O(\log n)$ on the competitive ratio for the line metric. 

\section{Background and Algorithm Details}
\label{sec:algorithm}
In this section, we introduce the relevant background and describe the $\textrm{RM}$-algorithm.

A \emph{matching} $M\subseteq S\times R$ is any set of vertex-disjoint edges of the complete bipartite graph denoted by $G(S,R)$. The cost of any edge $(s,r) \in S\times R$ is given by $d(s,r)$; we assume that the cost function satisfies the metric property.\ The cost of any matching $M$ is given by the sum of the costs of its edges, i.e., $\wt(M) = \sum_{(s,r)\in M}d(s,r)$. A \emph{perfect matching} is a matching where every server in $S$ is serving exactly one request in $R$, i.e., $|M|=n$. A \emph{minimum-cost perfect matching} is a perfect matching with the smallest cost.

Given a matching $\ms$, an \emph{alternating path} (resp. cycle) is a simple path (resp. cycle) whose edges alternate between those in $\ms$ and those not in $\ms$. We refer to any vertex that is not matched in $\ms$ as a \emph{free vertex}. An \emph{augmenting path} $P$ is an alternating path between two free vertices. We can \emph{augment} $\ms$ by one edge along $P$ if we remove the edges of $P \cap \ms$ from $\ms$ and add the edges of $P \setminus \ms$ to $\ms$. After augmenting, the new matching is precisely given by $\ms \oplus P$, where $\oplus$ is the symmetric difference operator. A matching $\ms$ and a set of  dual weights, denoted by $y(v)$  for each point  $v\in S\cup R$, is a \emph{$t$-feasible matching} if,
for any request $r \in R$ and a server $s\in S$,  the following conditions hold:
\begin{eqnarray}
y(s) + y(r) &\le&  td(s,r), \label{eq:feas1}\\
y(s) + y(r)  &=&  d(s,r)\quad \mbox{if } (s,r) \in \ms. 
\label{eq:feas2}
\end{eqnarray}

  Also, we refer to an edge $(s,r) \in S\times R$ to be \emph{eligible} if either $(s,r) \in \ms$ or $(s,r)$ satisfies inequality~\eqref{eq:feas1} with equality:
\begin{eqnarray}
y(s) + y(r) &=&  td(s,r), \quad \mbox{if } (s,r) \notin \ms \label{eq:elig1}\\
y(s) + y(r)  &=&  d(s,r)\quad \mbox{if } (s,r) \in \ms. 
\label{eq:elig2}
\end{eqnarray}

For a parameter $t \ge 1$, we define the \emph{$t$-net-cost} of any augmenting path $P$ with respect to $\ms$ to be:
$$\phi_t(P)=t\left(\sum_{(s,r)\in P\setminus \ms}d(s,r)\right) -\sum_{(s,r)\in P\cap \ms}d(s,r).$$
The definitions of $t$-feasible matching, eligible edges and $t$-net cost (when $t=1$) are also used in describing the well-known Hungarian algorithm which computes the minimum-cost matching.  In the Hungarian algorithm, initially $\ms = \emptyset$ is a $1$-feasible matching with all the dual weights $y(v)$  set to $0$. In each iteration, the Hungarian Search procedure adjusts the dual weights $y(\cdot)$ and computes an augmenting path $P$ of eligible edges while maintaining the $1$-feasibility of $\ms$ and then augments $\ms$ along $P$. The augmenting path $P$ computed by the standard implementation of the Hungarian search procedure can  be shown to also have the minimum $1$-net-cost.  

Using this background, we  describe the $\textrm{RM}$-Algorithm. At the start, the value of   $t    $ is chosen at the start of the algorithm.  The algorithm maintains two matchings: an online matching $M$ and a $t$-feasible matching (also called the \emph{offline matching}) $\ms$ both of which are initialized to $\emptyset$. After processing $i-1$ requests, both matchings $M$ and $\ms$  match each of the $i-1$ requests to the same subset of servers in $S$, i.e., the set of free (unmatched) servers $S_F$ is the same for both   $M$ and  $\ms$.  To process the $i^{th}$ request $r_i$, the algorithm does the following
\begin{enumerate}
\item 
Compute the minimum $t$-net-cost augmenting path $P_i$ with respect to the offline matching $\ms$. Let $P_i$ be this path starting from $r_i$ and ending at some server $s_i \in S_F$.
\item Update offline matching $\ms$ by augmenting it along $P_i$, i.e., $\ms\leftarrow\ms\oplus P_i$ and update online matching $M$ by matching $r_i$ to $s_i$. $M \leftarrow M\cup\{(s_i,r_i)\} $. 
\end{enumerate}
  While computing the minimum $t$-net-cost augmenting path in Step $1$, if there are multiple paths with the same $t$-net-cost, the algorithm will simply select the one with the fewest number of edges. Throughout this paper, we set $t=3$. In~\cite{r_approx16}, we present an $O(n^2)$-time  search algorithm that is similar to the Hungarian Search to compute the minimum $t$-net-cost path in Step 1 any phase $i$ and we describe this next. 

   The implementation of Step $1$ of the algorithm is similar in style to the Hungarian Search procedure. To  compute the minimum $t$-net-cost path $P_i$ , the algorithm grows an alternating tree consisting only of eligible edges. There is an alternating path of eligible edges from $r_i$ to every server and request participating in this tree. To grow this tree, the algorithm increases the dual weights of every request in this tree until at least one more edge becomes eligible and a new vertex enters the tree. In order to maintain feasibility, the algorithm reduces the dual weights of all the servers in this tree by the same amount. This search procedure ends when an augmenting path $P_i$ consisting only of eligible edges is found. Let $B_i$ (resp. $A_i$) be the set of requests (resp. servers) that participated in the alternating tree of phase $i$. Note that during Step $1$, the dual weights of requests in $B_i$ may only increase and the dual weights of servers in $A_i$ may only reduce.

\ignore{For the sake of completion, we present a precise description of the first step of phase $i$ next. This description is identical to the Hungarian search when $t=1$.   Consider a directed (residual) graph $\dir{G}_{\ms}$, where for any edge $(s,r) \in S\times R$, there is an edge directed from $r$ to $s$ with a weight of $t\dist(s,r) - y(s) - y(r)$  if $(s,r) \not\in \ms$ . Otherwise, if $(s,r) \in \ms$, then we add an edge directed from $s$ to $r$ with a weight $0$. Note that, by feasibility conditions, the weight of every edge in $\dir{G}_{\ms}$ is non-negative. Given this weighted directed graph, the first sub-phase simply executes Dijkstra's algorithm from $r_i$ and computes the shortest distances to every vertex in this graph; let $\ell_v$ denote the length of the shortest path from $r_i$ to $v$ as computed by Dijkstra's algorithm  in $\dir{G}_{\ms}$. Next, let $\ell = \min_{v\in S_F^i} \ell_v$ and $v = \arg\min_{v \in S_F^i} \ell_v$. If there are many vertices with the same shortest path distance of $\ell$, we choose  $v$ to be the vertex that has smallest number of edges in the shortest path from $r_i$. We set the shortest path from $s$ to $v$ as $\dir{P}_i$ and the corresponding path in $G(S,R)$ as the augmenting path $P_i$. For every vertex $v \in S\cup R$, we update its dual weight as follows: if $\ell_v < \ell$ and $v \in R$, then we increase the dual weight of this request by setting $y(v) \leftarrow y(v) + (\ell - \ell_v)$. Otherwise, if $\ell_v < \ell$ and $v \in S$, then we reduce the dual weight of the server by setting $y(v) \leftarrow y(v) - (\ell - \ell_v)$. With the updated dual weight, it can be shown that $P_i$ is an augmenting path consisting only of eligible edges. This completes the description of the first step of phase $i$ of the algorithm.\           
}

The second step begins once the augmenting path $P_i$ is found. The algorithm augments the offline matching $\ms$ along this path. Note that, for the $\ms$ to be $t$-feasible, the edges that newly enter  $\ms$ must satisfy (\ref{eq:feas2}). In order to ensure this, the algorithm will reduce the dual weight of each request $r$ on $P_i$  to $y(r) \leftarrow y(r)-(t-1)d(s,r)$. Further details of the algorithm and proof of its correctness can be found in~\cite{r_approx16}. In addition, it has also been shown that the algorithm maintains the following three invariants:
\begin{enumerate}[label=(I\arabic*)]
\item The offline matching $\ms$ and dual weights $y(\cdot)$ form a $t$-feasible matching,
\item For every server $s \in S$, $y(s) \le 0$ and if $s \in S_F$, $y(s)=0$. For every request $r \in R$, $y(r)\ge 0$ and if $r$  has not yet arrived, $y(r)=0$,
\item At the end of the first step of phase $i$ of the algorithm the augmenting path $P_i$ is found and the dual weight of $r_i$, $y(r_i)$, is equal to the $t$-net-cost $\phi_t(P_i)$. 
\end{enumerate}

\subparagraph*{Notations:}Throughout the rest of this paper, we will use the following notations. We will index the requests in their order of  arrival, i.e.,  $r_i$ is the $i$th request to arrive. Let $R_i$ be the set of first $i$ request. Our algorithm processes the request $r_i$, by computing an augmenting path $P_i$. Let $s_i$ be the free server at the other end of the augmenting path $P_i$. Let $\mathbb{P}=\{P_1,\ldots,P_n\}$ be the set of $n$ augmenting paths generated by the algorithm.  In order to compute the augmenting path $P_i$, in the first step, the algorithm adjusts the dual weights and constructs an alternating tree; let $B_i$ be the set of requests and let $A_i$\ be the set of servers that participate in this alternating tree. Let 
$\ms_i$ be the offline matching after the $i$th request has been processed; i.e., the matching obtained after augmenting the matching $\ms_{i-1}$ along $P_i$.  Note that $\ms_0=\emptyset$  and $\ms_n = \ms$ is the final matching after all the $n$ requests have been processed. The online matching $M_i$ is the online matching after $i$ requests have been processed. $M_i$ consists of edges $\bigcup_{j=1}^i (s_j,r_j)$. Let $S^i_F$ be the free servers with respect to matchings $M_{i-1}$ and $\ms_{i-1}$, i.e., the set of free servers at the start of phase $i$.
For any path $P$, let $\ell(P) = \sum_{(s,r)\in P}d(s,r)$ be its \emph{length}. 

Next, in Section~\ref{sec:nn}, we will present the approximate nearest free server (ANFS) property of the RM-Algorithm. In Section~\ref{sec:wam}, we present an well aligned matching of a well separated input instance. In Section~\ref{sec:dual},  we interpret the execution of each phase of the $\textrm{RM}$-Algorithm in the line metric. Finally, in Section~\ref{sec:1danalysis}, we  give our analysis of the algorithm for the line metric.

\section{New Properties of the Algorithm}
In this section, we present new properties of the $\textrm{RM}$-Algorithm. First, we show that the RM-Algorithm will assign an approximate nearest free server for many requests and show that the total cost of these ``short'' matches will be at least one-sixth of the online matching cost.  Our proof of this property is valid for any metric space.

\subsection{Approximate Nearest Free Server Property}
\label{sec:nn}

We divide the $n$ augmenting paths $\mathbb{P}=\{P_1,\ldots, P_n\}$ computed by the RM-Algorithm into two sets, namely \emph{short} and \emph{long}  paths.  For any $i$, we refer to $P_i$ as a \emph{short} augmenting path if $\ell(P_i) \le \frac{4}{t-1}\phi_t(P_i)$ and \emph{long} otherwise.  Let $H\subseteq \mathbb{P}$ be this set of all short augmenting paths and $L=\mathbb{P}\setminus H$ be the \emph{long} augmenting paths. In phase $i$,  the algorithm adds an edge between $s_i$ and $r_i$ in the online matching.  We refer to any edge of the online matching $(s_i,r_i)$ as a \emph{short edge} if $P_i$ is a short augmenting path. Otherwise, we refer to this edge as a \emph{long} edge. The   set of all short edges,  $M_H$ and the set of long edges  $M_L$ partition the edges of the online matching $M$.
 
At the start of phase $i$,  $S_F^{i}$ are the set of free servers.  Let $s^*\in S_F^{i}$ be the server closest to $r_{i}$, i.e., $s^*=\arg\min_{s\in S_F^{i}}d(r_i,s).$
Any other server $s \in S_F^{i}$ is an \emph{$\alpha$-approximate nearest free server} to the request $r$ if $$d(r_{i},s) \le \alpha\dist(r_{i},s^*).$$

 In Lemma~\ref{lem:anfs} and~\ref{lem:shortcost}, we show that the short edges in $M_{H}$ match a request to a  $6$-ANFS and the cost of $w(M_H)$ is at least one-sixth of the cost of the online matching.  
 
\begin{lemma}
\label{lem:anfs}
For any request $r_{i}$, if $P_i$ is a short augmenting path, then $s_i$ is a $(4+\frac{4}{t-1})$-ANFS of $r_i$.
\end{lemma}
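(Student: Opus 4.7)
The plan is to exploit the minimum $t$-net-cost property of the augmenting path $P_i$ chosen by the RM-Algorithm. The key observation is that whenever $s^* \in S_F^i$ is the nearest free server to $r_i$, the single-edge path $P' = \{(r_i, s^*)\}$ is itself an augmenting path with respect to $\ms_{i-1}$ (since both endpoints are free at the start of phase $i$). I will use this $P'$ as a ``witness'' against which to compare the path $P_i$ actually chosen by the algorithm.

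First, I compute $\phi_t(P')$ directly from the definition. Since $P' \cap \ms_{i-1} = \emptyset$ and $P' \setminus \ms_{i-1} = \{(r_i, s^*)\}$, we get $\phi_t(P') = t \cdot d(r_i, s^*)$. Next, because the RM-Algorithm selects $P_i$ to be an augmenting path of minimum $t$-net-cost, we have $\phi_t(P_i) \le \phi_t(P') = t \cdot d(r_i, s^*)$. Now invoking the hypothesis that $P_i$ is a \emph{short} augmenting path, i.e., $\ell(P_i) \le \frac{4}{t-1}\phi_t(P_i)$, I chain these bounds together:
\[
\ell(P_i) \;\le\; \frac{4}{t-1}\phi_t(P_i) \;\le\; \frac{4t}{t-1}\,d(r_i, s^*) \;=\; \left(4 + \frac{4}{t-1}\right) d(r_i, s^*).
\]
Finally, since $P_i$ is a path in the metric $d$ from $r_i$ to $s_i$, the triangle inequality yields $d(r_i, s_i) \le \ell(P_i)$, which finishes the proof.

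There is essentially no serious obstacle here; the only point worth double-checking is that the one-edge path $P'$ is genuinely a valid candidate that the minimality of $P_i$ must beat. This holds because $r_i$ is newly arrived (hence free with respect to $\ms_{i-1}$) and $s^* \in S_F^i$ is free by definition, so $P'$ is an alternating path between two free vertices, i.e., an augmenting path. The short-path hypothesis and the factor $t/(t-1)$ that emerges from combining the minimality with the short-path definition are precisely what produces the claimed constant $4 + 4/(t-1)$.
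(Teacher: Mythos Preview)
Your proof is correct and essentially identical to the paper's own proof: both use the single-edge path to the nearest free server $s^*$ as a witness, apply the minimality of $\phi_t(P_i)$ against it, and combine with the short-path inequality. The only cosmetic difference is that the paper folds the triangle inequality $d(r_i,s_i)\le \ell(P_i)$ directly into the line $\frac{t-1}{4}d(s_i,r_i)\le \phi_t(P_i)$ without stating it, whereas you make that step explicit.
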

\begin{proof}
Let $s^*$ be the nearest available server of $r_i$ in $S^{i}_F$. Both  $s^*$ and $r_i$ are free and so the edge $P=(s^*,r_i)$ is  also an augmenting path with respect to $\ms_{i-1}$ with $\phi_t(P)=t\dist(s^*,r_i)$. The algorithm computes $P_i$ which is the minimum $t$-net-cost path with respect to $\ms_{i-1}$ and so, 
$$\phi_t(P_i) \le t d(s^*,r_i).$$
Since $P_i$ is a short augmenting path,
\begin{eqnarray}
\frac{(t-1)}{4}d(s_i,r_i) \le \phi_t(P_i) \le td(s^*,r_i),\\
d(s_i,r_i) \le \frac{4t}{t-1}d(s^*,r_i) = (4+ \frac{4}{t-1})d(s^*,r_i),
\end{eqnarray}
implying that $s_i$ is a $(4+\frac{4}{t-1})$-approximate nearest free server to the request $r_i$.
\end{proof}
 
 \begin{lemma}
\label{lem:shortcost}
 Let $M_H$ be the set of short edges of the online matching $M$. Then,
 \begin{equation}
 \label{eq:shortcost}
 (4+\frac{4}{t-1})\wt(M_H) \ge \wt(M).
 \end{equation}  
 \end{lemma}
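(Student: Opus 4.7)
The natural starting point is the identity established in~\cite{r_approx16}, $w(M) \le \sum_{i=1}^n \phi_t(P_i)$, which upper-bounds the online matching cost by the total $t$-net-cost of all augmenting paths. Given this, it suffices to bound $\sum_i \phi_t(P_i)$ by $(4+\tfrac{4}{t-1})\,w(M_H)$, and my plan is to do so by splitting the sum into short and long contributions and using Lemma~\ref{lem:anfs}.

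The key per-phase bound I would use is that, since $P_i$ is chosen as the minimum $t$-net-cost augmenting path and the single-edge path $(s^*_i,r_i)$ from $r_i$ to its nearest free server $s^*_i$ is a valid candidate with $t$-net-cost exactly $t\cdot d(s^*_i,r_i)$, we have $\phi_t(P_i) \le t\cdot d(s^*_i,r_i) \le t\cdot d(s_i,r_i)$, where the second inequality uses $d(s^*_i,r_i)\le d(s_i,r_i)$. For each short path, the defining inequality $\ell(P_i) \le \tfrac{4}{t-1}\phi_t(P_i)$ together with $d(s_i,r_i) \le \ell(P_i)$ further gives $\phi_t(P_i) \ge \tfrac{t-1}{4}d(s_i,r_i)$; so for $i\in H$ the value $\phi_t(P_i)$ is two-sided-sandwiched between constant multiples of $d(s_i,r_i)$, which is precisely the ANFS factor from Lemma~\ref{lem:anfs}.

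The main obstacle is the long-path contribution $\sum_{i\in L}\phi_t(P_i)$. Long paths do not have the ANFS property, so the loose bound $\phi_t(P_i) \le t\cdot d(s_i,r_i)$ cannot be turned around to relate $d(s_i,r_i)$ back to $\phi_t(P_i)$. My plan is to keep the stronger upper bound $\phi_t(P_i) \le t\cdot d(s^*_i,r_i)$ for long phases and argue that the nearest-free-server distances for long phases are themselves controlled by the short-edge structure already built up by the algorithm: intuitively, a long rematching path arises precisely when a short direct match to a nearby free server is already unavailable, so each long phase's $d(s^*_i,r_i)$ can be charged against short edges already present in $M_H$ to nearby servers.

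Adding the short-phase two-sided bound and the long-phase upper bound, then rearranging, should collapse into $w(M)\le \sum_i \phi_t(P_i) \le \tfrac{4t}{t-1}w(M_H)=(4+\tfrac{4}{t-1})w(M_H)$, which rearranges to $w(M_L)\le(3+\tfrac{4}{t-1})w(M_H)$. The appearance of the exact ANFS constant $\tfrac{4t}{t-1}$ in both Lemma~\ref{lem:anfs} and this statement is a strong indication that the aggregation must preserve the per-phase ANFS ratio, so I would be careful not to lose any slack when combining the short- and long-path bounds.
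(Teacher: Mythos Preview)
Your proposal has a genuine gap at the step that handles the long-path contribution $\sum_{i\in L}\phi_t(P_i)$. The charging idea --- bound $\phi_t(P_i)\le t\,d(s^*_i,r_i)$ for $i\in L$ and then ``charge $d(s^*_i,r_i)$ to short edges already in $M_H$'' --- is not an argument; you have specified neither which short edges absorb the charge nor why each short edge receives at most a bounded amount. The intuition that a long path only arises ``when a short direct match to a nearby free server is already unavailable'' is also not correct: $P_i$ being long means $\phi_t(P_i)<\tfrac{t-1}{4}\ell(P_i)$, i.e.\ the path has a lot of internal cancellation, and says nothing about the distance to the nearest free server. In particular $d(s^*_i,r_i)$ for a long phase can be large with no short edge nearby to charge it to, so the sketch as written does not close.

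The paper's proof avoids any charging and does not use Lemma~\ref{lem:anfs} at all. The missing ingredient is a global telescoping identity: from $w(M^*_i)-w(M^*_{i-1})$ expressed via $P_i$ one derives
\[
\frac{t+1}{2}\bigl(w(M^*_i)-w(M^*_{i-1})\bigr)=\phi_t(P_i)-\frac{t-1}{2}\ell(P_i),
\]
and summing over $i$ (the left side telescopes to $\tfrac{t+1}{2}w(M^*)\ge 0$) yields $\sum_i\phi_t(P_i)\ge\tfrac{t-1}{2}\sum_i\ell(P_i)$. Splitting both sums into $H$ and $L$ and using the \emph{definition} of long paths, namely $\tfrac{t-1}{2}\ell(P_i)>2\phi_t(P_i)$ for $i\in L$, immediately gives the clean inequality $\sum_{i\in H}\phi_t(P_i)\ge\sum_{i\in L}\phi_t(P_i)$. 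Doubling the short side and chaining with the telescoped inequality and $\ell(P_i)\ge d(s_i,r_i)$ gives $2\sum_{i\in H}\phi_t(P_i)\ge\tfrac{t-1}{2}w(M)$; finally $\phi_t(P_i)\le t\,d(s_i,r_i)$ for $i\in H$ produces the stated constant $\tfrac{4t}{t-1}$. The point is that the long-path contribution is controlled \emph{structurally} by the short/long dichotomy plus the telescoping bound, not by any geometric charging to $M_H$.
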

 \begin{proof}
 Since the matchings $\ms_i$ and $\ms_{i-1}$ differ only in the edges of the augmenting path $P_i$, we have
 \begin{eqnarray}
 \wt(\ms_i) - \wt(\ms_{i-1}) &=& \sum_{(s,r) \in P_i \setminus
 \ms_{i-1}}d(s,r) - \sum_{(s,r) \in P_i\cap \ms_{i-1}}d(s,r)
 \label{eq:dist}\\
 &=& \phi_t(P_i) - \left((t-1)\sum_{(s,r) \in P_i \setminus
 \ms_{i-1}}d(s,r)\right)\nonumber\\
 &=& \phi_t(P_i) - \left(\frac{t-1}{2} \sum_{(s,r) \in P_i \setminus
 \ms_{i-1}}d(s,r) + \frac{t-1}{2} \sum_{(s,r) \in P_i \setminus
 \ms_{i-1}}d(s,r)\right).\nonumber
 \end{eqnarray}

 The second equality follows from the definition of $\phi_t(\cdot)$.
Adding and subtracting $\displaystyle (\frac{t-1}{2}) \sum_{(s,r) \in
 P_i\cap \ms_{i-1}}d(s,r)$ to the RHS we get,

 \begin{eqnarray*}
 \wt(\ms_i) - \wt(\ms_{i-1})
 &=& \phi_t(P_i) - \frac{t-1}{2} \left(\sum_{(s,r) \in P_i \setminus
 \ms_{i-1}}d(s,r) +\sum_{(s,r) \in P_i\cap
 \ms_{i-1}}d(s,r)\right)\\
 & &  - \frac{t-1}{2} \left(\sum_{(s,r) \in P_i \setminus
 \ms_{i-1}}d(s,r) - \sum_{(s,r) \in P_i\cap
 \ms_{i-1}}d(s,r)\right)\\
 &=& \phi_t(P_i) - \frac{t-1}{2}(\sum_{(s,r) \in P_i}d(s,r)) -
 \frac{t-1}{2} ( \wt(\ms_i) - \wt(\ms_{i-1})).
 \end{eqnarray*}

 The last equality follows from~\eqref{eq:dist}. Rearranging terms and
 setting $\sum_{(s,r) \in P_i} d(s,r) = \ell(P_i)$, we get,

 \begin{eqnarray}
 \frac{t+1}{2} (\wt(\ms_i) - \wt(\ms_{i-1})) &=&  \phi_t(P_i) -
 \frac{t-1}{2}\ell(P_i), \nonumber\\
 \frac{t+1}{2} \sum_{i=1}^n(\wt(\ms_i) - \wt(\ms_{i-1}))&=&
 \sum_{i=1}^n\phi_t(P_i) - \frac{t-1}{2}\sum_{i=1}^n \ell(P_i),\nonumber\\
 0 &\le& \sum_{i=1}^n\phi_t(P_i) - \frac{t-1}{2}\sum_{i=1}^n\ell(P_i),\nonumber\\
 \sum_{i=1}^n\phi_t(P_i) &\ge& \frac{t-1}{2}\sum_{i=1}^n\ell(P_i).\label{eq:costlength}
 \end{eqnarray}
 In the second to last equation, the summation on the LHS telescopes
 canceling all terms except $\wt(\ms_n)- \wt(\ms_0)$. Since $\ms_n=\ms$ and
 $\ms_0$ is an empty matching, we get $\wt(\ms_n)- \wt(\ms_0)=\wt(\ms)$. As $\wt(\ms)$ is always a positive value, the second to last equation follows.

 Recollect that $H$ is the set of short augmenting paths and $L$ is the set of long augmenting paths with $\mathbb{P}=L\cup H$.  We rewrite (\ref{eq:costlength})
$$
 \sum_{P_i\in H}\phi_t(P_i) + \sum_{P_i\in L}\phi_t(P_i) \ge \frac{t-1}{2}\sum_{P_i\in H} \ell(P_i) + \frac{t-1}{2}\sum_{P_i\in L} \ell(P_i).$$

\begin{eqnarray}
 \sum_{P_i\in H}\phi_t(P_i) + \sum_{P_i\in L}\phi_t(P_i) &\ge& \frac{t-1}{2}\sum_{P_i\in H} \ell(P_i) +2 \sum_{P_i\in L}\phi_t(P_i),\nonumber\\
 \sum_{P_i\in H}\phi_t(P_i) &\ge& \sum_{P_i\in L}\phi_t(P_i).\label{eq:longshort}
 \end{eqnarray}
 
 The last two inequalities follow from the fact that  $\frac{t-1}{2}\sum_{P_i\in H}\ell(P_i)$ is a positive term and also the  definition of long paths, i.e., if $P_i$ is a long path then $\phi_t(P_i) \le \frac{t-1}{4}\ell(P_i)$. Adding $\sum_{P_i\in H}\phi_t(P_i)$ to~\eqref{eq:longshort} and applying~\eqref{eq:costlength}, we get 
 \begin{equation}
 \label{eq:shortcost1}
 2\sum_{P_i \in H}\phi_t(P_i) \ge \sum_{P_i\in L\cup H}\phi_t(P_i)\ge \frac{t-1}{2}\sum_{i=1}^n \ell(P_i) \ge \frac{t-1}{2}\wt(M).\end{equation}

When request $r_i$ arrives, the edge $P'=(r_i,s_i)$ is an augmenting path of length $1$ with respect to $\ms_{i-1}$and has a $t$-net-cost $\phi_{t}(P') = td(r_i,s_i)$. Since $P_i$ is the minimum $t$-net-cost path, we have
$\phi_t(P_i) \le t\dist(s_i,r_i)$.
Therefore, we can write~\eqref{eq:shortcost1}  as
\begin{eqnarray*}
2t\wt(M_H) = 2t\sum_{P_i\in H} d(s_i,r_i) \ge 2\sum_{P_i\in H}\phi_t(P_i) \ge \frac{t-1}{2}\wt(M),
\end{eqnarray*}
or $(4+\frac{4}{t-1})\wt(M_H)\ge \wt(M)$ as desired.

 \end{proof}
   If we set $t =3$, then $6\wt(M_H) \ge \wt(M)$ or $\wt(M_H) \ge \wt(M)/6$.

\subparagraph*{Convention and Notations for the Line Metric:}  When $S$ and $R$ are points on a line, any point $v \in S\cup R$ is simply a real number. We can interpret any edge $(s,r) \in S\times R$ as the line segment that connects $s$ and $r $ and its cost $d(s,r)$ as $|s-r|$. We will abuse notation and use $(s,r)$ to denote both the edge as well as the corresponding line segment. A matching of $S$ and $R$, therefore, is also a set of line segments, one corresponding to each edge. For any closed interval $I = [x,y]$, $open(I)$ will be the open interval $(x,y)$. We define the boundary, $bd(I)$ of any closed (or open) interval $I=[x,y]$ (or $I=(x,y)$) to be the set of two end points$\{x,y\}$.\     The optimal matching of points on a line has a very simple structure which we describe next.

\subparagraph*{Properties of optimal matching on a line:}
For any point set $K$  on a line, let $\sigma(K)$ be a sequence of points of $K$  sorted in increasing order of their coordinate value.
Given sets $S$ and $R$, consider sequences $\sigma(S)=\langle s^1,\ldots,s^n \rangle$ and $\sigma(R)=\langle r^1,\ldots, r^n\rangle$. The minimum-cost matching $M_{\opt}$ will match server $s^i$ to request $r^i$. We will show this next.  

 \  Let $K\subset S\cup R$.  Suppose $K$ contains $n_1$ points from $S$ and $n_2$ points from $R$ and  let $\mathrm{diff}(K) = |n_1-n_2|$.  Consider $\sigma(S\cup R)=\langle k^1,\ldots, k^{2n}\rangle$ and let $K_j$ be  the first $j$ points in the sequence $\sigma(S\cup R)$. Note that, for any $j$,  $\mathrm{diff}(K_j)=\mathrm{diff}((S\cup R)\setminus K_j)$ and there are precisely $\mathrm{diff}(K_j)$ more (or fewer) vertices of  $S$ than $R$ in $K_j$. Consider the intervals $\kappa^j = [k^j,k^{j+1}]$, for every $1\le j \le 2n-1$ with a length $\ints(\kappa^j)=|k^{j+1}-k^j|$.  Any perfect matching will have at least $\mathrm{diff}(K_j)$ edges with  one end point in $K_j$ and the other in $(S\cup R)\setminus K_j$. Every such edge will contain the interval $\kappa^j$ and so, the cost of any perfect matching $M$ is at least $\wt(M) \ge \sum_{j=1}^{2n-1}\mathrm{diff}(K_j)\ints(\kappa^j)$. 

  We claim that the matching   $M_{\opt}$ described above has a cost  $ \sum_{j=1}^{2n-1}\mathrm{diff}(K_j)\ints(\kappa^j)$ and so $M_{\opt}$ is an optimal matching. For any $j$, without loss of generality, suppose there are $\mathrm{diff}(K_j)$ more points of $S$ than $R$ in $K_j$ and so $K_j$ contains the points $\langle r^1,\ldots, r^{(j-\mathrm{diff}(K_j))/2}\rangle$ and $\langle s^1,\ldots, s^{(j+\mathrm{diff}(K_j))/2}\rangle$. In the optimal solution, we match $\langle r^1,\ldots, r^{(j-\mathrm{diff}(K_j))/2}\rangle$ with $\langle s^1,\ldots, s^{(j-\mathrm{diff}(K_j))/2}\rangle$; the remaining $\mathrm{diff}(K_j)$ servers in $K_j$  match to requests in $(S\cup R)\setminus K_j$ . Therefore, for every $1\le j \le 2n-1$, there are exactly $\mathrm{diff}(K_j)$ edges in $M_{\opt}$ that contain the interval $\kappa^j$ and so, we can express its cost as $\wt(M_{\opt})= \sum_{j} \mathrm{diff}(K_j)\ints(\kappa^j)$.

Any edge $(s,r)$ of the matching is called a \emph{left edge} if the server $s$ is to the left of request $r$. Otherwise, we refer to the edge as the \emph{right edge}. Consider any perfect matching $M$   of $S$ and $R$ and  for an interval $\kappa^j = [k^j , k^{j+1}]$, let $M(\kappa^j)$ be the edges of $M$ that contain the interval $\kappa^j$. For every interval $\kappa^j$,  if  the edges in $M(\kappa^j)$ are either all left edges or all right edges, then as an easy consequence from the above discussion, it follows that $M$ is an optimal matching.    
\begin{enumerate}[label=(OPT)]
\item For every interval $\kappa^j$,  if  the edges in $M_{\kappa^j}$ are either all left edges or all right edges, then   $M$ is an optimal matching.
\end{enumerate}

\subsection{Properties of 1-dimensional Matching}
\label{sec:wam} 
In this section, we define certain input instances that we refer to as a well-separated input instance. We then define matchings that are well-aligned for such instance and then bound the cost of such a well-aligned matching by the cost of their optimal matching. In Section~\ref{sec:1danalysis}, we divide the edges of the online matching into well-aligned matchings on well-separated instances. This will play a critical role in bounding the competitive ratio for the line metric.  
\subparagraph*{Well-separated instances and well aligned matchings:} A \emph{well-separated instance} of the $1$-dimensional matching problem is defined as follows.  
For any $\Delta > 0$ and $0 < \eps \le 1/8$, consider the following four intervals $I_{M}=[0, \Delta]$, $I_L=[-\eps\Delta, 0],  I_R=[\Delta, (1+\eps)\Delta]$ and $I_A=[-\eps\Delta, (1+\eps)\Delta]$.  Note that $I_L$ is the leftmost interval, $I_R$ is the rightmost interval, and $I_M$ lies in the middle of the $I_L$ and $I_R$.  $I_A$ is simply the union of $I_L, I_R$, and $I_M$. We say that any input set of servers $S$ and requests $R$  is an $\eps$-well-separated input if, for some $\Delta > 0$, there is a translation of $S\cup R$ such that $S\subset I_L\cup I_R$ and $R\subset I_A$. 

Given an $\eps$-well-separated input $S$ and $R$,  consider the intervals, $I_L' = [-\eps\Delta, \eps\Delta]$ and $I_R' = [(1-\eps)\Delta, (1+\eps)\Delta]$.  We  divide the edges of any matching $M$ of $S$ and $R$ into three groups. Any edge $(s,r) \in M$ is \emph{close} if $s,r \in I_L'$ or $s,r \in I_R'$.  Any edge $(s,r) \in M$ is a \emph{far} edge if $(s,r) \in I_L'\times I_R'$ or $(s,r) \in I_R'\times I_L'$. Let $M_{\mathrm{close}}$ and  $M_{\mathrm{far}}$ denote the close and far edges of $M$ respectively.  For a matching edge $(s,r) $, we denote it as a \emph{medium edge} if  the request $r$ is inside the interval $[\eps\Delta, (1-\eps\Delta)]$ and the server $s$ is inside the interval $I_L$ or $I_R$. We denote this set of edges as the \emph{medium edges} and denote it by $M_{\mathrm{med}}$. From the well-separated property of the input it is easy to see that $M = M_{\mathrm{far}}\cup M_{\mathrm{med}}\cup M_{\mathrm{close}}$.   A matching $M$ is \emph{$\eps$-well-aligned} if all the edges of $M_{\mathrm{close}}$ with both their endpoints inside $I_R'$ (resp. $I_L'$) are right (resp. left) edges. See Figure~\ref{fig:wam} for an example of $\eps$-well aligned matching of an $\eps$-well separated input instance. Any $\eps$-well-aligned matching of an $\eps$-well-separated input instance satisfies the following property.

\begin{figure}
\centering
\includegraphics[scale=0.9]{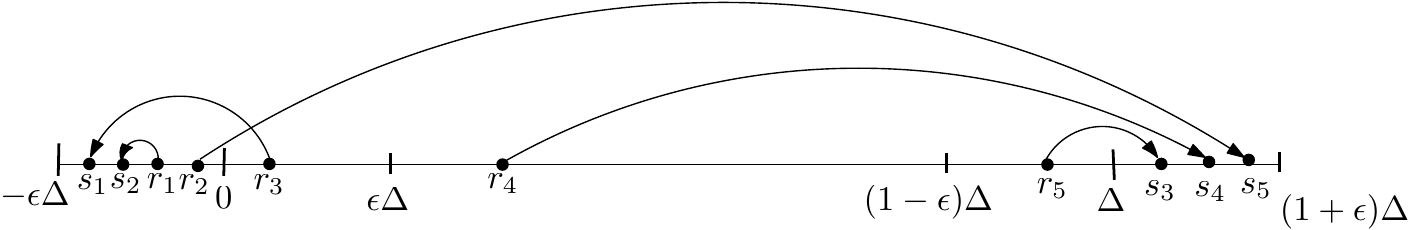}
\caption{All servers $S=\{s_1,s_2,s_3,s_4,s_5\} \subset [-\eps\Delta, 0]\cup[\Delta,(1+\eps)\Delta]$ and requests $R=\{r_1,r_2,r_3,r_4,r_5\} \subset [-\eps\Delta, (1+\eps)\Delta]$. So, $S\cup R$ is an $\eps$-well separated input instance.  The matching $M=\{(r_1,s_2),(r_2,s_5),(r_3,s_1), (r_4,s_4),(r_5,s_3)\}$ is partitioned into $M_{\mathrm{close}}=\{(r_1,s_2),(r_3,s_1),(r_5,s_3)\}$, $M_{\mathrm{far}}=\{(r_2,s_5)\}$ and $M_{\mathrm{med}}=\{(r_4,s_4)\}$. $M$ is $\eps$-well aligned since the edges of $M_{\mathrm{close}}$ in $[-\eps\Delta,(1+\eps)\Delta]$ are left edges (server is to the left of request) and those in $[(1-\eps)\Delta, (1+\eps)\Delta]$ are right edges (server is to the right of request).  }
\label{fig:wam}
\end{figure}

\begin{lemma}
\label{lem:wspc}
For any $0 \le \eps \le 1/8$, given an $\eps$-well-separated input $S$ and $R$ and an $\eps$-well-aligned matching $M$, let $M_{\opt}$ be the optimal matching of $S$ and $R$ and let $M_{\mathrm{close}}, M_{\mathrm{far}}$ and $M_{\mathrm{med}}$ be as defined above. Then,  $$\wt(M_\mathrm{close})+\wt(M_{\mathrm{med}}) \le \bigg(\frac{2}{\eps}+3\bigg)\wt(M_{\opt}) + \frac{4\eps}{1-2\eps} \wt(M_\mathrm{far}).$$
\end{lemma}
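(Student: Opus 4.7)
The plan is to bound $\wt(M_{\mathrm{med}})$ and $\wt(M_{\mathrm{close}})$ separately and combine.

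For medium edges, each has length at most $\Delta$ (both endpoints lie in $[-\eps\Delta, (1+\eps)\Delta]$). Since $(\eps\Delta, (1-\eps)\Delta)$ contains no server, every request located there --- even in $M_{\opt}$ --- must be matched to a server in $I_L \cup I_R$ at distance at least $\eps\Delta$. Hence $\wt(M_{\opt}) \ge \eps\Delta \cdot |M_{\mathrm{med}}|$, so $\wt(M_{\mathrm{med}}) \le (1/\eps)\wt(M_{\opt})$.

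For close edges, focus on $M_{\mathrm{close}, L}$, the close edges inside $I_L'$; by well-alignedness, these are all left edges. Because the cost of any all-left matching of a fixed endpoint set equals $\sum_r r - \sum_s s$ (independent of the pairing), $\wt(M_{\mathrm{close}, L})$ equals the cost of the optimal matching of its endpoints $S_L^c \cup R_L^c$, which by property (OPT) equals $\int_{I_L'} f_L(x)\,dx$, where $f_L(x) = |S_L^c \cap (-\infty,x]| - |R_L^c \cap (-\infty,x]|$ and $f_L \ge 0$ on $I_L'$. Writing $f_L = f - A + B$, where $A(x)$ counts the servers in $I_L$ that are not endpoints of close edges (hence in medium-left or far edges) and have position $\le x$, and $B(x)$ counts the requests in $I_L'$ that are not endpoints of close edges (hence in far edges with server in $I_R$) and have position $\le x$, and dropping $-\int A \le 0$,
\[
  \wt(M_{\mathrm{close}, L}) \;\le\; \int_{I_L'} |f(x)|\,dx + \int_{I_L'} B(x)\,dx.
\]
For each such far edge $(s', r)$ with $r \in I_L'$ and $s' \in I_R$, its contribution to $\int B$ is $\eps\Delta - r$; using $-\eps\Delta \le r \le \eps\Delta$ and $s' \ge \Delta$, a short calculation gives $\eps\Delta - r \le 2\eps(s' - r)$, and summing yields $\int_{I_L'} B \le 2\eps\,\wt(M_{\mathrm{far}})$.

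Reflecting about $\Delta/2$ swaps $I_L' \leftrightarrow I_R'$ and left $\leftrightarrow$ right edges, so the identical argument on the R-side gives $\wt(M_{\mathrm{close}, R}) \le \int_{I_R'} |f|\,dx + 2\eps\,\wt(M_{\mathrm{far}})$. Summing, and using $\int_{I_L'}|f| + \int_{I_R'}|f| \le \wt(M_{\opt})$, gives $\wt(M_{\mathrm{close}}) \le \wt(M_{\opt}) + 4\eps\,\wt(M_{\mathrm{far}})$. Adding $\wt(M_{\mathrm{med}})$ and substituting the medium bound yields $\wt(M_{\mathrm{close}}) + \wt(M_{\mathrm{med}}) \le (1 + 1/\eps)\wt(M_{\opt}) + 4\eps\,\wt(M_{\mathrm{far}})$, which implies the claim since $1 + 1/\eps \le 2/\eps + 3$ and $4\eps \le 4\eps/(1-2\eps)$ for $\eps \le 1/8$. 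The hardest step is the structural identification of $\wt(M_{\mathrm{close}, L})$ with $\int f_L$ via well-alignedness and (OPT); afterward, controlling the gap $f - f_L$ reduces to per-edge inequalities absorbed into $\wt(M_{\mathrm{far}})$.
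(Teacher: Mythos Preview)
Your proof is correct and takes a genuinely different route from the paper's.

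Both arguments handle $\wt(M_{\mathrm{med}})$ identically. For $\wt(M_{\mathrm{close}})$, the paper introduces an intermediate optimal matching $M_{\mathrm{cf}}^{\opt}$ on the close$\,\cup\,$far vertex set, proves $\wt(M_{\mathrm{close}})\le \wt(M_{\mathrm{cf}}^{\opt})+4\eps\Delta|M_{\mathrm{far}}|$ via the discrete interval decomposition $\sum_j \mathrm{diff}(K_j)\mathscr{L}(\kappa^j)$, and then bounds $\wt(M_{\mathrm{cf}}^{\opt})\le(1/\eps+3)\wt(M_{\opt})$ by surgery on $M_{\opt}$ (deleting medium endpoints and rematching arbitrarily). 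You instead work directly with the continuous imbalance function $f(x)=|S\cap(-\infty,x]|-|R\cap(-\infty,x]|$: the well-aligned property forces $f_L\ge 0$ on $I_L'$, so $\wt(M_{\mathrm{close},L})=\int_{I_L'}f_L$, and the pointwise identity $f_L=f-A+B$ lets you drop $A$ and charge $\int B$ edge-by-edge to $\wt(M_{\mathrm{far}})$. This bypasses $M_{\mathrm{cf}}^{\opt}$ entirely and yields the sharper inequality $\wt(M_{\mathrm{close}})+\wt(M_{\mathrm{med}})\le(1+1/\eps)\wt(M_{\opt})+4\eps\,\wt(M_{\mathrm{far}})$, from which the stated lemma follows. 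The paper's detour through $M_{\mathrm{cf}}^{\opt}$ is what produces the looser constants $2/\eps+3$ and $4\eps/(1-2\eps)$; your approach is both shorter and quantitatively tighter. (In fact, since the $B$-terms on the $L$- and $R$-sides count disjoint sets of far edges, your argument actually gives $2\eps\,\wt(M_{\mathrm{far}})$ rather than $4\eps\,\wt(M_{\mathrm{far}})$, though this does not matter for the application.)
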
  
\begin{proof}
Let $M_{\mathrm{close}}$ and $M_{\mathrm{far}}$, be the edges of $M$ that are in $I_{L}' \times I_{L}'$ (or $I_R'\times I_R'$) and $I_{R}'\times I_{L}'$ (or $I_L'\times I_R'$) respectively and $M_{\mathrm{med}}$ be the remaining edges of $M$. Let $S_{\mathrm{close}}$ and $R_{\mathrm{close}}$ be the servers and requests that participate in $M_{\mathrm{close}}$. Similarly, we define the sets $S_{\mathrm{far}}$ and $R_{\mathrm{far}}$ for $M_{\mathrm{far}}$ and the sets $S_{\mathrm{med}}$ and $R_{\mathrm{med}}$ for $M_{\mathrm{med}}$. Let $M_{\mathrm{close}}^{\opt}$ denote the optimal matching of $S_{\mathrm{close}}$ and $R_{\mathrm{close}}$ and let $M_{\mathrm{cf}}^{\opt}$ denote the optimal matching of $S_{\mathrm{far}}\cup S_{\mathrm{close}}$ with $R_{\mathrm{far}}\cup S_{\mathrm{close}}$. The following four claims will establish the lemma:
\begin{enumerate}
\item[(i)] $\wt(M_{\mathrm{close}}^{\opt})=\wt(M_{\mathrm{close}})$, i.e., $M_{\mathrm{close}}$ is an optimal matching of $S_{\mathrm{close}}$ and $R_{\mathrm{close}}$,
\item[(ii)] $\wt(M_{\mathrm{med}}) \le (1/\eps)\wt(M_{\opt})$,
\item[(iii)] $\wt(M_{\mathrm{cf}}^{\opt}) \le (1/\eps+3)\wt(M_{\opt})$,
and,
\item[(iv)] $\wt(M_{\mathrm{close}}^{\opt}) - 4\eps\Delta|M_{\mathrm{far}}| \le \wt(M_{\mathrm{cf}}^{\opt})$.

\end{enumerate} 
The matching $M_{\mathrm{close}}$ is a perfect matching of servers $S_{\mathrm{close}}$ and requests $R_{\mathrm{close}}$. Note that all edges of $M_{\mathrm{close}}$ are inside the interval $[-\eps\Delta, \eps\Delta]$ and $[(1-\eps)\Delta, (1+\eps)\Delta]$. Furthermore, those that are inside the interval  $[-\eps\Delta, \eps\Delta]$ are left edges and the edges that are inside $[(1-\eps)\Delta, (1+\eps)\Delta]$ are right edges. Therefore, $M_{\mathrm{close}}$ satisfies the precondition for (OPT) and so, $M_{\mathrm{close}}$ is an optimal matching of $S_{\mathrm{close}}$ and $R_{\mathrm{close}}$ implying (i).

To prove (ii), observe that any edge $(s,r)$ in $M_{\mathrm{med}}$ has the request $r$ inside the interval $[\eps\Delta, (1-\eps)\Delta]$. Therefore, the maximum length of any such edge is at most $\Delta$. On the other hand, let $s'$ be the match of $r$ in the optimal matching  $M_{\opt}$ . From the well-separated property, $s' \in [-\eps\Delta, 0]\cup [\Delta, (1+\eps)\Delta]$ and since $r \in [\eps\Delta, (1-\eps)\Delta]$, $\eps\Delta$ is a lower bound the length of $(s',r)$. Therefore, the cost of all the edges in $M_{\mathrm{med}}$ is bounded $(1/\eps) \wt(M_{\opt})$.\  

We prove (iii) as follows: Let $M_{\opt}$ be the optimal matching of $S$ and $R$. Note that every request in $R_{\mathrm{med}}$ is contained inside the interval $[\eps\Delta, (1-\eps)\Delta]$. Since all servers are in the interval $[0,-\eps\Delta]\cup [\Delta, (1+\eps)\Delta]$, every edge of $M_{\opt}$ that is incident on any vertex of $R_{\mathrm{med}}$ has a cost of at least $\eps\Delta$.   Initially set $M_{\mathrm{tmp}}$ to $M_{\opt}$. For every edge  $(s,r) \in M_{\mathrm{med}}$ , we remove points $s$ and $r$ and the edges  of $M_{\opt}$ incident on them from $M_{\mathrm{tmp}}$; note that the other end point of the edges of $M_{\mathrm{tmp}}$ incident on $s$ and $r$ can be any vertex of $R$ and $S$ including points from $R_{\mathrm{close}}\cup R_{\mathrm{far}}$ and $S_{\mathrm{close}}\cup S_{\mathrm{far}}$. After the removal of points, the vertex set of $M_{\mathrm{tmp}}$ is  $S_{\mathrm{close}}\cup S_{\mathrm{far}}$ and $R_{\mathrm{close}} \cup R_{\mathrm{far}}$. Removal of the edges can create at most $|M_{\mathrm{med}}|$ free vertices in $S_{\mathrm{close}}\cup S_{\mathrm{far}}$  with respect to $M_{\mathrm{tmp}}$. Similarly there are at most $|M_{\mathrm{med}}|$ free vertices in $R_{\mathrm{close}}\cup R_{\mathrm{far}}$ with respect to $M_{\mathrm{tmp}}$ .  We match these free nodes arbitrarily in $M_{\mathrm{tmp}}$ at a cost of at most $(1+2\eps)\Delta$ per edge. Therefore, the total cost of the matching $M_{\mathrm{tmp}}$ is at most $\wt(M_{\opt}) + |M_{\mathrm{med}}|(1+2\eps)\Delta$. For every $r \in R_{\mathrm{med}}$ the edge of $M_{\opt}$ incident on $r$ has a cost of at least $\eps\Delta$. Therefore, the cost of $M_{\opt}$ is at least $|M_{\mathrm{med}}|\eps\Delta$.  Combined, the new matching $M_{\mathrm{tmp}}$ matches $S_{\mathrm{close}}\cup S_{\mathrm{far}}$ to $R_{\mathrm{close}}\cup R_{\mathrm{far}}$ and has a cost at most $(1/\eps +3)\wt(M_{\opt})$ leading to (3).  

To prove (iv), let $|S_{\mathrm{close}}| = n_{\mathrm{close}}$. Consider the sequence $\sigma(S_{\mathrm{close}} \cup R_{\mathrm{close}})$ and let $K_j$ be the first $j$ points in this sequence and let $\kappa^j$ be the interval $[k^j, k^{j+1}]$. Let $\mathbb{K}$ be the set of all intervals $\kappa^j$ for $1\le j \le 2n_{\mathrm{close}}-1$. As described previously, we can express the cost of the optimal matching $M_{\mathrm{close}}^{\opt}$ as $\sum_j \mathrm{diff}(K_j)\mathscr{L}(\kappa^j)$. Let $j'$ be the largest index such that the coordinate value of the point $k^{j'}$ is at most $\eps\Delta$. By construction, all edges of $M_{\mathrm{close}}$ are contained inside $I_L'=[-\eps\Delta,\eps\Delta]$ and $I_R'=[(1-\eps)\Delta, (1+\eps)\Delta]$ and so,  $\mathrm{diff}(K_{j'})$ is $0$ and $\kappa^{j'}$ contains the interval $[\eps\Delta, (1-\eps)\Delta]$.
Therefore, we can express the cost of the optimal matching
 $M_{\mathrm{close}}^{\opt}$ as 
 $$\wt(M_{\mathrm{close}}^{\opt})= \sum_{\kappa^j \in \mathbb{K}\setminus \kappa^{j'}} \mathrm{diff}(K_j)\mathscr{L}(\kappa^j).$$ 
 Furthermore, since the intervals in $\mathbb{K}$ decompose the interval $[-\eps\Delta,(1+\eps)\Delta]$ and since $\kappa^{j'}$ contains $[\eps\Delta, (1-\eps)\Delta]$, we get 
\begin{eqnarray}\sum_{\kappa^j \in \mathbb{K}\setminus \kappa^{j'}}\mathscr{L}(\kappa^j) &\le& 4\eps\Delta.\label{eq:smallint}\end{eqnarray} 
 
When we add points of the set $S_{\mathrm{far}}$ and $R_{\mathrm{far}}$ to $S_{\mathrm{close}}\cup R_{\mathrm{close}}$, we further divide  every interval $\kappa^{j}$  into smaller  intervals say $\{\kappa^{j}_1,\kappa^{j}_2,\ldots, \kappa^{j}_{c_{j}}\}$. We can express the cost of the optimal matching
 $M_{\mathrm{close}}^{\opt}$ as 
 \begin{eqnarray}\wt(M_{\mathrm{close}}^{\opt})&=&\sum_{\kappa^j \in \mathbb{K}\setminus \kappa^{j'}} \mathrm{diff}(K_j)\mathscr{L}(\kappa^j)= \sum_{\kappa^j\in \mathbb{K}\setminus \kappa^{j'}} \sum_{i=1}^{c_j} \mathrm{diff}(K_j)\mathscr{L}(\kappa^j_i).\label{eq:closeopt}\end{eqnarray}

 Let $K_{j}^{i}$ be the subset of $S_{\mathrm{far}}\cup R_{\mathrm{far}} \cup S_{\mathrm{close}}\cup R_{\mathrm{close}}$ that are to the left of the interval $\kappa_{j}^{i}$ (including those that are on the left boundary of the interval). We can express the cost of the optimal matching $M_{\mathrm{cf}}^{\opt}$ as 
\begin{eqnarray}\wt(M_{\mathrm{cf}}^{\opt}) &=& \sum_{\kappa^j\in \mathbb{K}} \sum_{i=1}^{c_j} \mathrm{diff}(K_j^i)\mathscr{L}(\kappa^j_i)\ge \sum_{\kappa^j\in \mathbb{K}\setminus \kappa^{j'}} \sum_{i=1}^{c_j} \mathrm{diff}(K_j^i)\mathscr{L}(\kappa^j_i).\label{eq:cfcost}\end{eqnarray}The difference between the sets $K_j$ and $K_j^i$ is that $K_j^i$ contains points from $R_{\mathrm{far}}\cup S_{\mathrm{far}}$ and so, $\mathrm{diff}(K_j^i)\ge \mathrm{diff}(K_j) -|M_{\mathrm{far}}|$. Using this along with~\eqref{eq:cfcost} the above inequality, we get

$$\wt(M_{\mathrm{cf}}^{\opt}) )\ge \sum_{k_j\in \mathbb{K}\setminus \kappa_{j'}} \sum_{i=1}^{c_j} \mathrm{(diff}(K_{j})-|M_{\mathrm{far}}|)\mathscr{L}(\kappa_j^i) \ge$$ $$ \sum_{\kappa^j\in \mathbb{K}\setminus \kappa^{j'}} \sum_{i=1}^{c_j} \mathrm{diff}(K_j)\mathscr{L}(\kappa^j_i) - |M_{\mathrm{far}}|\sum_{\kappa^j \in \mathbb{K} \setminus \kappa^{j'}} \mathscr{L}(\kappa^j) \ge \wt(M_{\mathrm{close}}^{\opt}) - 4\eps\Delta|M_{\mathrm{far}}|,   $$     implying (iv). The last two inequalities follow from~\eqref{eq:smallint} and~\eqref{eq:closeopt}.\\  Since the length of every edge in $M_{\mathrm{far}}$ is at least $(1-2\eps)\Delta$, we can rewrite the above equation as 
$$\wt(M_{\mathrm{close}}^{\opt}) - \frac{4\eps}{1-2\eps}\wt(M_{\mathrm{far}}) \le \wt(M_{\mathrm{close}}^{\opt}) - \frac{4\eps}{1-2\eps}(1-2\eps)\Delta|M_{\mathrm{far}}| \le \wt(M_{\mathrm{cf}}^{\opt}).$$
The proof of this lemma follows from combining the above equation with (i), (ii) and (iii).

\end{proof} 

\subsection{Interpreting Dual Weights for the Line Metric}
\label{sec:dual}
Next, we interpret the dual weights and their changes during the RM-Algorithm for the line metric and derive some of its useful properties.

 \subparagraph*{Span of a request:} For any request $r \in R$, let $\ym{i}(r)$ be the largest dual weight that is assigned to $r$ in the first $i$ phases. The second step of phase $i$ does not increase the dual weights of requests, and so,  $\ym{i}(r)$ must be a dual weight assigned at the end of first step of some phase $j\le i$.  For any request $r$ and any phase $i$, the span of $r$, denoted by $span(r,i)$, is  an open interval that is centered at $r$ and has a length of $\frac{2\ym{i}(r)}{t}$, i.e., $span(r,i)=\left( r-\frac{\ym{i}(r)}{t}, r+\frac{\ym{i}(r)}{t}\right).$ We will refer to the closed interval  $[r-\frac{\ym{i}(r)}{t},r+\frac{\ym{i}(r)}{t}]$ with center $r$ and length $\frac{2\ym{i}(r)}{t}$ as $cspan(r,i)$.   

Intuitively, request $r$ may participate in one or more alternating trees in the first $i$ phases of the algorithm. The dual weight of every request that participates in an alternating tree, including $r$, increases. These increases reflect their combined search for a free server.  The region $span(r,i)$ represents the region swept by $r$ in its search for a free server  in the first $i$ phases. We show in Lemma~\ref{lem:emptyspan} that the span of any request does not contain a free server in its interior.  We show that, during the search, if the span of a request $r$ expands to include a free server $s \in S_F$ on its boundary, i.e., $s \in bd(span(r,i))$, then the algorithm would have found a minimum $t$-net-cost path and the search would stop. Therefore, the open interval $span(r,i)$ will remain empty.

\begin{lemma}
\label{lem:emptyspan}
For every $r$, $span(r,i) \cap S^i_F = \emptyset$.
\end{lemma}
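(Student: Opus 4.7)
The plan is to combine $t$-feasibility (invariant (I1)) with the zero-dual-weight property of free servers (invariant (I2)). The key observation is that whenever a server $s$ is free, $y(s)=0$, so the feasibility inequality $y(s)+y(r)\le t\,d(s,r)$ collapses to a purely geometric bound $y(r)\le t\,d(s,r)$ that ties the request's dual weight to its distance from $s$.

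First I would fix any free server $s\in S^i_F$ and the request $r$. By the remark immediately preceding the lemma, $\ym{i}(r)$ is realized at the end of the first step of some phase $j\le i$. Since servers, once matched, stay matched forever, and $s$ is unmatched at the start of phase $i$, the server $s$ must also have been unmatched throughout all earlier phases. In particular, at the end of the first step of phase $j$ no augmentation has yet occurred in that phase, so $s$ is still free at that moment. Invariant (I2) then gives $y(s)=0$ at precisely the instant when $y(r)=\ym{i}(r)$.

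Next I would invoke invariant (I1), which asserts that the offline matching $\ms_{j-1}$ together with the current dual weights is $t$-feasible throughout Step 1 of phase $j$. Applying the feasibility inequality $y(s)+y(r)\le t\,d(s,r)$ at the snapshot identified above and substituting $y(s)=0$, $y(r)=\ym{i}(r)$, gives
\[
d(s,r)\;\ge\;\frac{\ym{i}(r)}{t}.
\]
Because $span(r,i)$ is the \emph{open} interval of radius $\ym{i}(r)/t$ centered at $r$, this inequality places $s$ outside $span(r,i)$. Since $s\in S^i_F$ was arbitrary, we conclude $span(r,i)\cap S^i_F=\emptyset$.

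The main obstacle is not mathematical but bookkeeping: one must be careful to verify that the same server $s\in S^i_F$ is simultaneously free at the (possibly much earlier) phase $j$ at which $y(r)$ attained its historical maximum, so that invariant (I2) can be invoked at that moment. Once this monotonicity-of-matched-status argument is in place, the lemma reduces to a single application of $t$-feasibility.
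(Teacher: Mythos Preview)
Your proof is correct and follows essentially the same approach as the paper's: both identify the phase $j\le i$ at which $y(r)$ attains $\ym{i}(r)$, apply the $t$-feasibility invariant (I1) at that moment, and use (I2) to conclude $y(s)=0$ for the free server, yielding $d(s,r)\ge \ym{i}(r)/t$. The only differences are cosmetic---you argue directly while the paper argues by contradiction, and you make explicit the monotonicity step (that $s\in S^i_F$ implies $s$ was free at the earlier phase $j$) which the paper leaves implicit.
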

\begin{proof}
For the sake of contradiction, we assume that the  span of a request $r$ contains a free server $s$, i.e., $s \in S_F^i \cap span(r,i)$. So, the distance between $r$ and $s$ is 
$|s-r| < \frac{\ym{i}(r)}{t},$ or,
\begin{equation}
\label{eq:lab}
\ym{i}(r) > t(|s-r|).
\end{equation}

Since $\ym{i}(r)$ is the largest dual weight assigned to $r$, there is a phase $j \le i$, when request $r$ is assigned this dual weight by the algorithm. Since the first step of the algorithm may only increase and the second step   may only decrease the dual weights of any request, we can assume that $y(r) $ was assigned the dual weight of $ \ym{i}(r)$  at the end of the first step of some phase $j$. Let $y(s)$ and $y(r)=\ym{i}(r)$ be the dual weights at  the end of the first step of phase $j$. From Invariant (I1), it follows that $\ym{i}(r) + y(s) \le t(|s-r|)$. From this and (\ref{eq:lab}), we have $y(s) < 0$. The free server $s$ has  $y(s)< 0$ contradicting invariant (I2).
\end{proof}

\begin{lemma}
\label{lem:included}
Let $(s,r)$ be any eligible edge at the end of the first step of phase $i$. Then,
$$y_{max}^i(r) \ge t|s-r|,$$
implying that $s\in cspan(r,i)$ and the edge $(s,r) \subset span(r,i)$.
\end{lemma}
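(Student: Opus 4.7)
The plan is to split into two cases depending on whether the eligible edge $(s,r)$ lies in the offline matching $\ms$ or not, and in each case bound the dual weight $y(r)$ at (or before) the end of the first step of phase $i$ from below.

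\emph{Case 1: $(s,r) \notin \ms$.} Here eligibility (\ref{eq:elig1}) gives the equality $y(s) + y(r) = t|s-r|$ at the end of the first step of phase $i$. By invariant (I2), every server has $y(s) \le 0$, so $y(r) \ge t|s-r|$. Since $y_{max}^i(r)$ is the maximum dual weight $r$ attains during the first $i$ phases, we immediately get $y_{max}^i(r) \ge y(r) \ge t|s-r|$.

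\emph{Case 2: $(s,r) \in \ms$.} The edge must have entered $\ms$ during some phase $j \le i$ as part of the augmenting path $P_j$, i.e., as a non-matching edge of $P_j$. At the end of the first step of phase $j$, eligibility (\ref{eq:elig1}) guarantees that the dual weights (call them $y^j(s)$ and $y^j(r)$) satisfy $y^j(s) + y^j(r) = t|s-r|$. Then, in the second step of phase $j$, the algorithm decreases $y(r)$ by $(t-1)|s-r|$ so that the edge, now a matching edge, becomes $t$-feasible via (\ref{eq:feas2}). In particular, the dual weight attained by $r$ at the end of the first step of phase $j$ was $y^j(r) = t|s-r| - y^j(s) \ge t|s-r|$, using $y^j(s) \le 0$ from invariant (I2). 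Therefore $y_{max}^i(r) \ge y^j(r) \ge t|s-r|$.

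\emph{From the bound to the containment.} Once we have $y_{max}^i(r) \ge t|s-r|$, it follows that $|s-r| \le y_{max}^i(r)/t$, which by definition of $cspan(r,i)$ places $s$ in the closed interval $[r - y_{max}^i(r)/t, \, r + y_{max}^i(r)/t] = cspan(r,i)$. The edge $(s,r)$, interpreted as the segment between $s$ and $r$, has both endpoints in this closed interval and its interior is contained in the open interval $span(r,i)$ (or coincides with a single endpoint of that interval when the inequality is tight), which gives the stated containment $(s,r) \subset span(r,i)$.

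The main subtlety is Case 2: one must remember that matched edges satisfy only the tight (non-$t$) constraint in the current dual weights, so reading off $y(r) \ge t|s-r|$ directly from the present state is impossible. The argument has to ``rewind'' to the phase in which $(s,r)$ was added to $\ms$ and exploit the fact that $(t-1)|s-r|$ is precisely the amount by which the algorithm later decreased $y(r)$; this is where the definition of $y_{max}^i(r)$ as a historical maximum is essential.
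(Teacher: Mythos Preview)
Your proof is correct and follows essentially the same approach as the paper's: split on whether $(s,r)$ is in the offline matching, use eligibility condition~(\ref{eq:elig1}) together with $y(s)\le 0$ from (I2) to get $y(r)\ge t|s-r|$ at the end of the first step of the relevant phase, and then invoke the definition of $y_{max}^i(r)$ as a historical maximum. The only minor differences are cosmetic: the paper explicitly picks the \emph{largest} $j$ with $(s,r)\in\ms_j\setminus\ms_{j-1}$ (any such $j<i$ works), and it does not spell out the $(t-1)|s-r|$ decrease since that observation, while correct, is not needed for the bound.
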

\begin{proof}
An edge is eligible if it is in $\ms_{i-1}$ or if it satisfies~\eqref{eq:elig1}. Suppose $(s,r) \notin \ms_{i-1}$ and satisfies (\ref{eq:elig1}). In this case,
$y(s)+y(r)=t(|s-r|)$. From (I2), $y(s) \le 0$ and so,  
$y(r) \ge t(|s-r|)$, 
implying that $\ym{i}(r) \ge  t(|s-r|)$. 

For the case where $(s,r) \in \ms_{i-1}$, let $0<j< i$ be the largest index such that $(s,r) \in \ms_j$ and $(s,r)\not\in \ms_{j-1}$. Therefore, $(s,r)\in P_j\setminus \ms_{j-1}$. Since $P_j$ is an augmenting path with respect to $\ms_{j-1}$, every edge of $P_j\setminus \ms_{j-1}$ satisfies the eligibility condition (\ref{eq:elig2}) at the end of the first step of phase $j$ of the algorithm. For any vertex $v$, let $y'(v)$ be its dual weight after the end of the first step of phase $j$ of the algorithm. From (\ref{eq:elig2}), we have 
$y'(r) + y'(s) \le t|s-r|$. From (I2), since $y'(s) \le 0$, we have $y'(r) \ge t|s-r|$. By definition, $\ym{i}(r) \ge y'(r)$ and therefore $\ym{i}(r) \ge t|s-r|$.
\end{proof}

\subparagraph*{Search interval of a request:}Recollect that $B_i$ is the set of requests that participate in the alternating tree of phase $i$. In Lemma~\ref{lem:connected},  we show that $\bigcup_{r \in B_i} cspan(r, i)$ is a single closed interval. We define \emph{search interval} of $r_i$, denoted by  $sr(r_i)$, as the open interval $open(\bigcup_{r \in B_i}cspan(r,i))$. The search interval of a request represents the  region searched for a free server by all the requests of $B_i$. In Lemma~\ref{lem:connected}, we establish a useful property of search interval of a request. We  show that the search interval of $r_i$  does not contain any free servers of $S^i_F$ and the free server $s_i$ (the match of $r_i$ in the online matching)  is at the boundary of $sr(r_i)$. Since the search interval contains $r_i$, it follows that $s_i$ is either the closest free server to the left or the closest free server to the right of $r_i$. Using the fact that all requests of $B_i$ are connected to $r_i$ by an  path of eligible edges along with Lemma~\ref{lem:emptyspan} and Lemma~\ref{lem:included}, we get the proof for Lemma~\ref{lem:connected}. 
\begin{lemma}
\label{lem:connected}
After phase $i$, $\bigcup_{r \in B_i} cspan(r,i)$ is a single closed interval and so,  the search interval $sr(r_i)$ of any request $r_i$ is a single open interval. Furthermore,
\begin{itemize}
\item  All edges between $A_i$ and $B_i$ are inside the search interval of $r_i$, i.e.,$A_i\times B_i \subseteq  sr(r_i)$,
\item There are no free servers inside the search interval of $r_i$, i.e.,$S_F^i \cap sr(r_i) = \emptyset$, and, 
\item The server $s_i$ chosen by the algorithm is on the boundary of the search interval of $r_i$, i.e., $s_i \in bd(sr(r_i))$.
\end{itemize} 
\end{lemma}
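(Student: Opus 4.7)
The plan combines the connectivity of the alternating tree built in Step~$1$ of phase~$i$ with Lemmas~\ref{lem:included} and~\ref{lem:emptyspan}. The four conclusions will be established in turn.

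For the single-interval property, I observe that the $cspan$s of adjacent tree-requests overlap: if $r, r' \in B_i$ are joined in the alternating tree through a server $s \in A_i$, then both $(s,r)$ and $(s,r')$ are eligible, so by Lemma~\ref{lem:included} we have $s \in cspan(r,i) \cap cspan(r',i)$, and hence $cspan(r,i) \cup cspan(r',i)$ is connected. Induction on the tree distance from $r_i$ then shows that $\bigcup_{r \in B_i} cspan(r,i)$ is connected in $\mathbb{R}$, and thus a single closed interval, so $sr(r_i)$ is its open interior. For the first bullet, every $r \in B_i$ has $y_{max}^i(r) > 0$ (a request enters the tree only after its dual weight strictly increases), so $r$ lies strictly inside $cspan(r,i)$ and therefore in $sr(r_i)$; every $s \in A_i$ is joined in the tree to some $r'' \in B_i$ by an eligible edge and thus lies in $cspan(r'',i)$ by Lemma~\ref{lem:included}. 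Both endpoints of any segment $(s,r) \in A_i \times B_i$ therefore lie in a single closed interval, with at least one endpoint strictly interior, so the segment is contained in $sr(r_i)$.

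For the second and third bullets, the key claim is that no free server can lie in the open interior of the union. By Lemma~\ref{lem:emptyspan}, a free server $s^*$ avoids every open $span(r,i)$ and so meets each $cspan(r,i)$ only on its boundary; lying in the interior of the union would therefore require two $cspan$s to touch $s^*$ on opposite sides. I plan to rule out this ``two-sided touch'' by viewing Step~$1$ as a process that raises tree duals at a uniform rate: at the first instant the closed union reaches $s^*$, the search terminates with $s^*$ as the newly discovered free server, so a two-sided touch would produce two alternating paths of equal $t$-net-cost to $s^*$, and the algorithm's tie-breaking rule of selecting the path with fewest edges then pins the chosen $s_i$ to one endpoint of the union. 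Together, this yields both $S^i_F \cap sr(r_i) = \emptyset$ and $s_i \in bd(sr(r_i))$.

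The main obstacle is precisely this open/closed subtlety in the second and third bullets: I must ensure that the algorithm's choice of $s_i$ always lands on an endpoint of the union, even when the continuous dual-weight process could in principle approach $s_i$ simultaneously from both sides. Once the two-sided-touch case is eliminated via the tie-breaking rule, the remaining statements follow as routine consequences of Lemmas~\ref{lem:included} and~\ref{lem:emptyspan}.
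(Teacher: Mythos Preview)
Your overall plan follows exactly the route the paper sketches just before the lemma: use that every $r\in B_i$ is connected to $r_i$ by an alternating path of eligible edges, apply Lemma~\ref{lem:included} to each edge to chain the $cspan$s, and invoke Lemma~\ref{lem:emptyspan} for the free-server claim. The single-interval argument and the first bullet are handled correctly by this chaining.

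The gap is in your treatment of the ``two-sided touch'' for the second and third bullets. Your proposed fix via the tie-breaking rule does not do what you claim. The rule ``pick the augmenting path with fewest edges among those of minimum $t$-net-cost'' selects a \emph{path}, not a different \emph{server}: if the closed union reaches a free server $s^\ast$ from both sides, both eligible paths still terminate at $s^\ast$, and the algorithm sets $s_i=s^\ast$ regardless of which path is chosen. So tie-breaking cannot ``pin $s_i$ to an endpoint of the union.'' There is a second conflation in the same paragraph: you reason about ``the first instant the closed union reaches $s^\ast$'' as though the $cspan$s grow continuously during phase~$i$, but $cspan(r,i)$ is defined via $y^i_{\max}(r)$, which may have been attained in an earlier phase and hence need not coincide with the dual weight the Hungarian-style search is currently raising. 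Thus one of the two touching $cspan$s could already contain $s^\ast$ on its boundary before phase~$i$ even begins, and the ``first instant'' picture does not apply.

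To repair this, you should argue at the level of feasibility rather than process timing: if $s^\ast\in S^i_F$ lies on $bd(cspan(r,i))$, then at the phase where $y_{\max}(r)$ was attained the edge $(s^\ast,r)$ was eligible and $s^\ast$ was already a candidate free endpoint; tracing what the algorithm actually selected in that earlier phase (and using that $s^\ast$ remained free) rules out the simultaneous opposite-side touch except in degenerate configurations. Alternatively, a general-position assumption (distinct pairwise distances) disposes of the case outright and is consistent with how the paper treats these boundary issues.
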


\subparagraph*{Cumulative search region:}After phase $i$, the  \emph{cumulative search region} $\sr_{i}$ for the first $i$ requests $R_i$ is the union of the  individual search intervals $\sr_i=  (sr(r_1)\cup sr(r_2)\ldots \cup sr(r_i))$. Since the cumulative search region is the union of open intervals, it contains a set  of  interior-disjoint  open intervals.   Let $\sigma_i$ of $\sr_i$ be a  sequence of these interior-disjoint open intervals in the cumulative search region ordered from left to right, i.e., $\sigma_{i}=\langle \C_1^{i},\ldots,\C_k^{i}\rangle$, where $\C_j^{i}$ appears before $\C_l^i$ in the sequence if and only if the interval $\C_j^i $ is to the left of interval $\C_l^i
$.     In Lemma~\ref{lem:csrprop}, we establish properties of the cumulative search region. We show that every edge of the online matching $M_i$ and the offline matching $\ms_i$ is contained inside some interval of the sequence $\sigma_i$. We also show that there are no free servers inside any interval of $\sigma_i$, i.e.,  $S_F^i \cap \sr_{i} = \emptyset$.  
\begin{lemma}\label{lem:csrprop}
 After phase $i$ of the algorithm,
\begin{itemize}  
\item For every edge  $(s,r) \in \ms_i \cup M_i$ there exists an interval $\C \in \sigma_i$ such that  $(s,r)\subseteq \C$,
\item The set of free servers $S_F^i$ satisfies $S_F^i \cap \sr_i =\emptyset$, and there is an interval $\C$ in $\sigma_i$ such that the server $s_i\in bd(\C)$.
\end{itemize}    \end{lemma}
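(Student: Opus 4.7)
The plan is to prove Lemma~\ref{lem:csrprop} by induction on $i$. The base case $i=0$ holds vacuously, since $\ms_0 \cup M_0 = \emptyset$, $\sr_0 = \emptyset$, and there is no $s_i$ to place on a boundary. For the inductive step I assume the claim for phase $i-1$ and argue each bullet for phase $i$.

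The key structural observation I would establish first is the relationship between $\sigma_{i-1}$ and $\sigma_i$. By definition $\sr_i = \sr_{i-1} \cup sr(r_i)$, and each $sr(r_j)$ with $j \le i-1$ depends only on dual weights accumulated by the end of phase $j$, so the component $\sr_{i-1}$ is unchanged when we move to phase $i$. Consequently, $\sigma_i$ is obtained from $\sigma_{i-1}$ by inserting the single open interval $sr(r_i)$, an operation that can merge adjacent intervals of $\sigma_{i-1}$ into a single larger interval but never splits any existing interval. In particular, every interval of $\sigma_{i-1}$ is contained in some interval of $\sigma_i$.

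For the first bullet, I split the edges of $\ms_i \cup M_i$ into three groups. Edges that were already in $\ms_{i-1} \cup M_{i-1}$ and survive in $\ms_i \cup M_i$ lie in some $\C' \in \sigma_{i-1}$ by the inductive hypothesis, and by the non-splitting observation above they lie in the containing interval $\C \in \sigma_i$. The newly added offline edges, $P_i \setminus \ms_{i-1}$, are eligible non-matching edges between vertices of $A_i$ and $B_i$; by Lemma~\ref{lem:connected}, $A_i \times B_i \subseteq sr(r_i)$, so each such edge is inside $sr(r_i)$ and hence inside the unique $\C \in \sigma_i$ that contains $sr(r_i)$. Finally, for the new online edge $(s_i,r_i)$, the request $r_i$ is the center of its own span and therefore lies in $sr(r_i)$, while $s_i \in bd(sr(r_i))$ by Lemma~\ref{lem:connected}; thus the closed segment $[s_i,r_i]$ lies in $\overline{sr(r_i)} \subseteq \overline{\C}$, which matches the intended (closure-respecting) reading of containment.

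For the second bullet, I would write $S_F^i \cap \sr_i = (S_F^i \cap \sr_{i-1}) \cup (S_F^i \cap sr(r_i))$. Because $S_F^i = S_F^{i-1}\setminus\{s_i\} \subseteq S_F^{i-1}$, the inductive hypothesis gives $S_F^i \cap \sr_{i-1} = \emptyset$, and Lemma~\ref{lem:connected} directly gives $S_F^i \cap sr(r_i) = \emptyset$. For the boundary claim, let $\C \in \sigma_i$ be the interval containing $sr(r_i)$; Lemma~\ref{lem:connected} yields $s_i \in bd(sr(r_i)) \subseteq \overline{\C}$. Since $s_i \in S_F^{i-1}$, the inductive hypothesis implies $s_i \notin \sr_{i-1}$, and by definition $s_i \notin sr(r_i)$ either, so $s_i \notin \sr_i$; as $\C$ is open and $s_i \in \overline{\C}\setminus\C$, we conclude $s_i \in bd(\C)$. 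The main delicate point throughout is keeping track of which snapshot of dual weights each $sr(r_j)$ refers to — this is what lets me treat $\sr_{i-1}$ as a fixed object when analyzing phase $i$ and lets me combine the inductive hypothesis cleanly with Lemma~\ref{lem:connected}; the rest is bookkeeping on how $sr(r_i)$ merges with the preexisting intervals.
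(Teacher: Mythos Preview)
The paper states Lemma~\ref{lem:csrprop} without proof, so there is nothing to compare against directly; your inductive argument is the natural one and matches the paper's surrounding narrative (the paragraph after the lemma describes exactly your ``key structural observation'' that $\sr_i=\sr_{i-1}\cup sr(r_i)$ merges a contiguous block of intervals of $\sigma_{i-1}$ into one interval of $\sigma_i$).

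Two small remarks. First, you write $S_F^i = S_F^{i-1}\setminus\{s_i\}$; with the paper's convention that $S_F^i$ is the set of free servers at the \emph{start} of phase $i$, this should read $S_F^i = S_F^{i-1}\setminus\{s_{i-1}\}$. Your conclusion $S_F^i\subseteq S_F^{i-1}$ is unaffected. Second, the open/closed issue you flag for the online edge $(s_i,r_i)$ applies equally to the last offline edge of $P_i$ (the one incident to $s_i$), since $s_i\in bd(sr(r_i))$; the same closure-respecting reading you invoke for $(s_i,r_i)$ covers that edge as well, and is consistent with how the paper uses the lemma later (e.g.\ in the proof of Lemma~\ref{lem:wspconline}, where it is used precisely that $s_l$ sits on the boundary of its interval).
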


When a new request $r_{i+1}$ is processed by the algorithm, the search interval $sr(r_{i+1})$ is added to the cumulative search region, i.e., $\sr_{i+1}=\sr_{i}\cup sr(r_{i+1})$. Suppose $\langle \C_j^i,\ldots, \C_l^i\rangle$ intersects $sr(r_{i+1})$, then $\sr_{i+1}$ will have a single interval that contains intervals $\langle \C_j^i,\ldots, \C_l^i\rangle$ and $sr(r_{i+1})$. From this description, an easy observation follows: 

\begin{itemize} 
\item[(O1)] Given two intervals $\C$ and $\C'$ in cumulative search regions $\sigma_i$ and $\sigma_j$ respectively,  with $j > i$ , then either  $\C\cap \C' = \emptyset$ or $\C \subseteq \C'$.
\end{itemize}

\subparagraph*{Matchings in cumulative search regions:}From the property of  cumulative search region established in Lemma~\ref{lem:csrprop},  every edge of the online matching $M_i$ and the offline matching $\ms_i$ is contained inside some interval of the sequence $\sigma_i$. We denote the edges of online and offline matching that appear in an interval $\C$ of $\sigma_i$ by $M_{\C}$ and $\ms_{\C}$ respectively. Therefore, $M_i = \bigcup_{\C\in \sigma_i}M_{\C}$ and $\ms_i = \bigcup_{\C\in \sigma_i}\ms_{\C}$.   Note that $M_\C$ and $\ms_\C$ match the same set of servers and requests. Let this set of servers and requests be denoted by $S_{\C}$ and $R_\C$ respectively.

Consider any sequence of interior-disjoint intervals $\langle \C_{j_1}^{i_1},\ldots, \C_{j_k}^{i_k}\rangle $, where each interval $\C_{j_{l}}^{i_l}$  appears in the  cumulative search region $\sigma_{i_l}$. Note that every interval in this set can appear in after the execution of a different phase. In Lemma~\ref{lem:costbnd}, we show that  $\sum_{l=1}^k\wt(\ms_{\C_{j_l}^{i_l}})\le t\wt(M_{\opt}) $, i.e., the  total cost of the subset of offline matching edges that are contained inside all of these  disjoint intervals is within a factor of $t$ times the cost of the optimal matching. This property, therefore, relates the cost of subsets of offline matchings, each of which appear at different phases to the optimal cost.

\begin{lemma}
\label{lem:costbnd}
For any $i,j$, let $\C_j^i$ be the $j$th interval in the sequence $\sigma_i$. Suppose we are given intervals $\C^{i_1}_{j_1}, \C^{i_2}_{j_2}\ldots, \C^{i_k}_{j_k}$ such that no two of these intervals have any mutual intersection. Then $\wt(\ms_{\C^{i_{1}}_{j_{1}} }) +\wt(\ms_{\C^{i_2}_{j_2}}) +\ldots + \wt(\ms_{\C^{i_k}_{j_k}}) \le t \wt(M_{\opt})$.
\end{lemma}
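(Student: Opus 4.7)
My plan is to bound each $\wt(\ms_{\C^{i_l}_{j_l}})$ by $t$ times the weight of the $M_{\opt}$-edges incident to $R_l$, and then sum across $l$ using the pairwise disjointness of the $R_l$.

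Fix $l$, and let $y$ denote the dual weights at the end of phase $i_l$. Invariant (I1) and the tightness $y(s)+y(r)=d(s,r)$ for every $(s,r) \in \ms_{\C^{i_l}_{j_l}}$ together give
\[
\wt(\ms_{\C^{i_l}_{j_l}}) = \sum_{v \in S_l \cup R_l} y(v).
\]
To bound the right-hand side, I charge each dual weight to an $M_{\opt}$-edge incident to $R_l$. For each $r \in R_l$ set $s := M_{\opt}^{-1}(r)$; by Lemma~\ref{lem:csrprop}, $S_l = S \cap \C^{i_l}_{j_l}$, so $s \in S_l$ iff $s \in \C^{i_l}_{j_l}$. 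In the \emph{inside} case, $t$-feasibility of the pair $(s,r)$ gives $y(s)+y(r) \le t\,d(s,r)$, and we charge both duals to the edge $(s,r) \in M_{\opt}$. In the \emph{outside} case I claim $y(r) \le t\,d(s,r)$ alone, and we charge only $y(r)$; this follows from the containment $cspan(r, i_l) \subseteq \overline{\C^{i_l}_{j_l}}$ (justified below), because then $s \notin \C^{i_l}_{j_l}$ forces $|s-r| \ge \ym{i_l}(r)/t \ge y(r)/t$. The servers $s' \in S_l$ whose $M_{\opt}$-match lies outside $R_l$ are not charged, but by Invariant (I2) each satisfies $y(s') \le 0$ and so contributes non-positively. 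Assembling these inequalities yields
\[
\wt(\ms_{\C^{i_l}_{j_l}}) \le t\,\wt(M_{\opt}|_{R_l}),
\]
where $M_{\opt}|_{R_l}$ is the set of $M_{\opt}$-edges incident to $R_l$. Since the intervals $\C^{i_l}_{j_l}$, and hence the $R_l$, are pairwise disjoint, each $M_{\opt}$-edge belongs to at most one $M_{\opt}|_{R_l}$, so $\sum_l \wt(M_{\opt}|_{R_l}) \le \wt(M_{\opt})$. Summing over $l$ completes the proof.

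The main technical step is establishing $cspan(r,i_l) \subseteq \overline{\C^{i_l}_{j_l}}$ for every $r \in R_l$. Let $j \le i_l$ be a phase at which $\ym{i_l}(r)$ is attained, so $r \in B_j$ and $cspan(r,j) = cspan(r,i_l)$. By Lemma~\ref{lem:connected}, $cspan(r,j) \subseteq \overline{sr(r_j)}$. Property (O1) places $sr(r_j)$ inside some interval of $\sigma_{i_l}$; because $r$ lies in both $sr(r_j)$ and $\C^{i_l}_{j_l} \in \sigma_{i_l}$ and intervals of $\sigma_{i_l}$ are pairwise disjoint, that interval must be $\C^{i_l}_{j_l}$, giving the desired containment.
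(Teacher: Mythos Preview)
Your proof is correct and uses essentially the same ingredients as the paper's: the identity $\wt(\ms_{\C})=\sum_{v\in S_{\C}\cup R_{\C}} y(v)$ from feasibility condition~(\ref{eq:feas2}), the sign condition $y(s)\le 0$ from (I2), and the span--containment fact $cspan(r,i_l)\subseteq \overline{\C^{i_l}_{j_l}}$ to handle optimal edges leaving the interval. The only difference is packaging: the paper glues the phase-$i_l$ duals into a single global assignment, checks that the union $\mathbb{M}=\bigcup_l \ms_{\C^{i_l}_{j_l}}$ is $t$-feasible for \emph{all} pairs $(s,r)$, and then invokes LP duality once, whereas you bound each $\wt(\ms_{\C^{i_l}_{j_l}})$ locally by $t\,\wt(M_{\opt}|_{R_l})$ and sum using disjointness of the $R_l$. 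Your route is slightly more economical in that you only verify the feasibility inequality on the $M_{\opt}$-edges you actually charge, rather than on every pair; the paper's route makes the LP-duality structure more explicit.
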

\begin{proof}
Let $S_{\C_j^i}$ and $R_{\C_j^i}$ be the servers and requests belonging to the interval $\C_j^i$.  Let $y^i(\cdot)$ be the dual weights of vertices in $S\cup R$\ at the end of phase $i$. To prove the lemma, we will assign a dual weight $y(\cdot)$ for every vertex such that  this assignment along with the matching $\M=\bigcup_{l=1}^k\ms_{\C_{j_l}^{i_l}}$ is a $t$-feasible matching that satisfy the feasibility conditions (\ref{eq:feas1}) and (\ref{eq:feas2}).  The dual assignment is made as follows
 \begin{enumerate}
\item 
Every server belonging to $s \in S \setminus  \bigcup_{l=1}^k S_{\C_{j_l}^{i_l}}$ and every request belonging to $r\in R \setminus \bigcup_{l=1}^k R_{\C_{j_l}^{i_l}}$ is assigned a dual weight of $0$, i.e., $y(r)\leftarrow 0, y(s) \leftarrow 0$. 
\item For each $1\le l\le k$, every server $s \in S_{\C_{j_l}^{i_l}} $ and every request $r\in R_{\C_{j_l}^{i_l}}$ are assigned a dual weight which is equal to the dual weight of s  and $r$ respectively after the completion of  phase $i_{l}$, i.e., $y(s) \leftarrow y^{i_l}(s)$, $y(r)\leftarrow y^{i_l}(r)$.
\end{enumerate} 
The matching $\M$ along with this dual assignment\ $y(\cdot)$ for vertices form $t$-feasible matching. Notice that, for every edge   $(s,r)\in \M$, the edge $(s,r)$ will be contained in one of the intervals;  let $\C^{i_l}_{j_l}$  be this interval. Since both end points of the matching edge $(s,r)$ are contained in the same interval $\C_{j_{l}}^{i_l}$\ after phase $i_{l}$, from our dual assignment, we have $y(s)=y^{i_l}(s)$ and $y(r)=y^{i_l}(r)$.  Since $(s,r)$ belongs to the offline matching at the end of phase $i_{l}$, i.e., $\ms_{\C_{j_l}^{i_l}} \subseteq \ms_{i_{l}}$ and since $\ms_{i_l}$ and $y^{i_l}(\cdot)$ form a $t$-feasible matching (from (I1)), we have, $y(s)+y(r) = y^{i_l}(s)+y^{i_l}(r) = |s-r|$. Therefore, every edge in the matching satisfies (\ref{eq:feas2}). For any edge $(s,r)$ that is not in $\M$,   let $r \in R_{\C_{j_l}^{i_l}}$. Then, there are two possibilities. Either $s \in S_{\C_{j_l}^{i_l}}$ or $s \notin S_{\C_{j_l}^{i_l}}$. We consider these cases separately. First, $s\in S_{\C_{j_l}^{i_l}}$. In this case, we can simply argue that both $r$ and $s$  have a dual weight $y^{i_l}(s)$ and $y^{i_l}(r)$ respectively. Since the dual weights after phase $i_l$ were $t$-feasible, from (\ref{eq:feas1}), we have, $y(s)+y(r) = y^{i_l}(s)+y^{i_l}(r) \le t|s-r|$. The second possibility is that $s \not\in S_{\C_{j_l}^{i_l}}$. Let $s\in S_{\C_{j_{l'}}^{i_{l'}}}$.  In this case, due to disjointness of $\C^{i_l}_{j_l}$ and $\C^{i_{l'}}_{j_{l'}}$, the server $s$ will not be contained in the span of $r$, i.e., $s\notin span(r,i_l)$. This implies, $y(r) =y^{i_l}(r) \le \ym{i_l}(r)$. Since $s$ is not in $span(r,i_l)$ (which is a ball of radius $\ym{i_l}(r)$, we have $|s-r| \ge \ym{i_l}(r)/t$. Therefore,  $y(r)=y^{i_l}(r) \le \ym{i_l}(r)\le t|s-r|$. From (I2), we have that $y^{i_{l'}}(s)\le 0$. Therefore, $y(r)+y(s) \le t|s-r|$, satisfying (\ref{eq:feas1}).  

Since, $\M$ is a $t$-feasible matching, next, we show that the cost of $\M$ is no more than $t\wt(M_{\opt})$. By our dual assignment, every unmatched server and request with respect to the matching $\M$ has a dual weight of $0$. For any edge $(s,r)\in \M$, the sum of the dual weights of $s$ and $r$ is exactly equal to $|s-r|$ (due to $t$-feasibility of dual weights $y(\cdot)$ and the matching $\M$). Therefore, the sum of all the dual weights is exactly equal to the cost of $\M$, i.e., $\sum_{v \in S\cup R} y(v) = \wt(\M)$. Next, consider the edges of the optimal matching $M_{\opt}$. For each such edge of $M_\opt$, since the dual weights $y(\cdot)$ satisfies the $t$-feasibility condition $y(s)+y(r) \le t|s-r|$, we have that the sum of the dual weights is $\sum_{v \in S\cup R} y(v) \le t\wt(M_{\opt}$). From this, we deduce that $\wt(\M)\le t\wt(M_{\opt})$ as desired.  
\end{proof}

\section{Analysis for the Line Metric}
\label{sec:1danalysis}
  
For each interval of any cumulative search region $\sigma_i$, we assign it a level between $0$ and $O(t\log nt)$  based on the length of the interval. We also partition the edges of the online matching into $O(t \log nt)$ levels.
\subparagraph*{Level of an Interval:} For any $0 <  i \le n$,  we assign a \emph{level} to every  interval of any cumulative search region $\sigma_i$.  The \emph{level} of any interval $\C_j^i\in \sigma_i$   , denoted by $\level(\C_j^i)$, is $k$ if   

$$(\appr)^k (\wt(M_{\opt})/n) \le \ints(\C_j^i) \le (\appr)^{k+1}(\wt(M_{\opt})/n).$$
Here $\ints(\C_j^i)$ is the length of the interval $\C_j^i$ . All intervals whose length $\ints(\C_j^i) \le \wt(M_{\opt})/n$ is assigned a level $0$. 

\subparagraph*{Level of an online edge:} For any request $r_{i}$ and its match  $s_{i}$ in $M_{i}$, let $r_{i}$ and $s_{i}$ be contained in an interval $\C'\in \sigma_{i} $. Then, the level of this edge  $(r_{i},s_{i})$, denoted by $\level(r_{i})$, is given by the level of the interval $\C'$, i.e., $\level(r_{i}) = \level(\C')$.

\begin{lemma}
\label{lem:numlevels}
The largest level of any online edge is $O(t\log (nt))$.
\end{lemma}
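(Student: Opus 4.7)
The plan is to upper-bound $\ints(\C)$ for any interval $\C$ in any cumulative search region $\sigma_i$ by a polynomial in $n$, $t$, and $\wt(M_{\opt})$, and then read off the level bound directly from the definition. Since $\level(\C)=k$ forces $\ints(\C)\ge(\appr)^k\wt(M_{\opt})/n$ and $\ln(1+1/(32t))\ge 1/(64t)$, any bound of the form $\ints(\C)\le n^{O(1)}t^{O(1)}\wt(M_{\opt})$ immediately yields $\level(\C)=O(t\log(nt))$.

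The first step is to show that for every arrival $r_j$,
\[
d(r_j,s_j^*) \;\le\; (t+1)\wt(M_{\opt}),
\]
where $s_j^*$ is the nearest free server at the start of phase $j$. Starting at $r_j$, I would follow a walk alternating $M_{\opt}$-edges with $\ms_{j-1}$-edges (beginning with $M_{\opt}$, which is possible since $r_j$ is free in $\ms_{j-1}$). Every request reached after $r_j$ is reached via an $\ms_{j-1}$-edge, hence belongs to $R_{j-1}$ and is also matched in the perfect matching $M_{\opt}$, so the walk can always continue from a request. The walk therefore terminates only at a server $s$ free in $\ms_{j-1}$, i.e., $s\in S_F^j$. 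By the triangle inequality $d(r_j,s_j^*)\le d(r_j,s)\le \wt(\ms_{j-1})+\wt(M_{\opt})\le (t+1)\wt(M_{\opt})$, where $\wt(\ms_{j-1})\le t\wt(M_{\opt})$ comes from applying Lemma~\ref{lem:costbnd} to the interior-disjoint intervals of $\sigma_{j-1}$. Since $(s_j^*,r_j)$ is a length-one augmenting path of $t$-net-cost $t\,d(r_j,s_j^*)$ and $P_j$ has minimum $t$-net-cost, we get $\phi_t(P_j)\le t(t+1)\wt(M_{\opt})$.

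The next step is to bound $\ym{n}(r)$ for each request $r$. In the Dijkstra-style first step of any phase $j$, every request in $B_j$ is raised by at most $\phi_t(P_j)=y^j(r_j)$ (the shortest-path length to the discovered free server), and the second step only decreases request duals; summing the positive contributions over phases gives $\ym{n}(r)\le\sum_{j:\,r\in B_j}\phi_t(P_j)\le n\,t(t+1)\wt(M_{\opt})$. Because $\ym{\cdot}(r)$ is monotone non-decreasing, we have $\sigma_n\subseteq\bigcup_{j,\,r\in B_j}cspan(r,j)\subseteq\bigcup_{r\in R}cspan(r,n)$, and so
\[
\ints(\C)\;\le\;\ints(\sigma_n)\;\le\;\sum_{r\in R}\frac{2\ym{n}(r)}{t}\;\le\;2n^2(t+1)\wt(M_{\opt}).
\]
Plugging this into the level definition gives $(\appr)^{\level(\C)}\le n\,\ints(\C)/\wt(M_{\opt})\le 2n^3(t+1)$, and taking logarithms yields $\level(\C)\le 64t\ln(2n^3(t+1))=O(t\log(nt))$.

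The main obstacle is the first step: one must use the perfect-matching structure of $M_{\opt}$ to show that the alternating walk from $r_j$ never gets stuck at a future request, and then combine the walk length with Lemma~\ref{lem:costbnd} to obtain the $(t+1)\wt(M_{\opt})$ bound on $d(r_j,s_j^*)$. Once this is in hand, the remaining work is a telescoping estimate on dual weights and a routine logarithm computation.
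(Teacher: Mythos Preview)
The paper states this lemma without proof, so there is no authors' argument to compare against. Your proof is correct and supplies exactly the kind of crude polynomial bound on interval lengths that the level definition needs.

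Two small remarks that would tighten the write-up. First, when you describe the alternating walk from $r_j$, you should say explicitly that it is the component of $r_j$ in the symmetric difference $M_{\opt}\oplus \ms_{j-1}$; since $r_j$ has degree $1$ there, this component is a simple path, which is what guarantees termination. Your parity observation (every request after $r_j$ is reached via an $\ms_{j-1}$-edge and hence is matched in $M_{\opt}$) then forces the other endpoint to be a server free in $\ms_{j-1}$. Second, the bound $\wt(\ms_{j-1})\le t\,\wt(M_{\opt})$ follows directly from invariants (I1)--(I2) by the usual LP-duality computation: summing duals over all vertices gives $\wt(\ms_{j-1})$ on one side (unmatched vertices have dual $0$) and at most $t\,\wt(M_{\opt})$ on the other via~(\ref{eq:feas1}). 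Invoking Lemma~\ref{lem:costbnd} works but is heavier than necessary. With these clarifications the chain $\phi_t(P_j)\le t(t+1)\wt(M_{\opt})$, $\ym{n}(r)\le n\,t(t+1)\wt(M_{\opt})$, and $\ints(\C)\le 2n^2(t+1)\wt(M_{\opt})$ is clean, and the final logarithm gives the claimed $O(t\log(nt))$.
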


\subparagraph*{Tracking the evolution of cumulative search region:} The cumulative search region $\sr_{i}$ after any phase $i$ will include disjoint intervals from the sequence $\sigma_i$. 
 When we process a new request $r_{i+1}$, the cumulative search region is updated to include the open interval $sr(r_{i+1})$.  This may combine a contiguous sub-sequence of intervals $\Gamma_{i+1}=\langle \C_j^i, \C_{j+1}^i,\ldots, \C_l^i\rangle$  of $\sigma_i$ along with the interval $sr(r_{i+1})$ to form a single open interval $\C_j^{i+1}$ in $\sigma_{i+1}$. We refer to the sequence of intervals   $\Gamma_{i+1}$   as the \emph{predecessors} of $\C_j^{i+1}$ and denote $i+1$ as the \emph{birth phase} of $\C_j^{i+1}$. For each interval $\C$ in $\Gamma_{i+1}$, we define its \emph{successor}  to be    $\C_j^{i+1}$  and denote $i+1$ as the \emph{death phase} of $\C$.
 Suppose $\C_j^{i+1}$ is a level $k$ interval. Then, it is easy to see that there is at most one level $k$ interval  in $\Gamma_{i+1}$.
\begin{lemma}
\label{lem:onelevelk}
For $k \ge 1$ and any level $k$ interval $\C_{j}^{i+1}$, there is at most one level $k$ interval in the predecessor set $\Gamma_{i+1}$.
\end{lemma}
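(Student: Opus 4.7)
The plan is to argue by a simple length-counting contradiction, exploiting that the growth factor $\appr$ between consecutive levels is strictly less than $2$. Since $t\ge 1$, we have $\appr \le 1+\tfrac{1}{32} < 2$, and this is the only arithmetic fact we will need.

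First I would recall that the successor interval $\C_j^{i+1}\in\sigma_{i+1}$ is formed as the union of the open intervals in $\Gamma_{i+1}=\langle\C_j^i,\ldots,\C_l^i\rangle$ together with the search interval $sr(r_{i+1})$. Because $\sigma_i$ is a sequence of interior-disjoint open intervals (a property of the cumulative search region established in Section~\ref{sec:dual}), the intervals in $\Gamma_{i+1}$ are pairwise interior-disjoint and are each contained in $\C_j^{i+1}$. Consequently, their lengths satisfy
\begin{equation*}
\sum_{\C\in\Gamma_{i+1}} \ints(\C) \;\le\; \ints(\C_j^{i+1}).
\end{equation*}

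Next I would plug in the bounds coming from the level definition. Since $\C_j^{i+1}$ has level $k$, we have $\ints(\C_j^{i+1}) \le (\appr)^{k+1}(\wt(M_{\opt})/n)$, and any predecessor of level $k$ has length at least $(\appr)^{k}(\wt(M_{\opt})/n)$. Suppose for contradiction that $\Gamma_{i+1}$ contains at least two level-$k$ intervals; then the displayed inequality above yields
\begin{equation*}
2\,(\appr)^{k}\,\frac{\wt(M_{\opt})}{n} \;\le\; (\appr)^{k+1}\,\frac{\wt(M_{\opt})}{n},
\end{equation*}
which simplifies to $2\le \appr$, i.e., $1\le \tfrac{1}{32t}$. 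This contradicts $t\ge 1$, so $\Gamma_{i+1}$ contains at most one level-$k$ interval.

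There is no genuine obstacle; the only mild subtlety is verifying that the predecessors are disjoint subsets of $\C_j^{i+1}$, which follows immediately from the structure of $\sigma_i$ and the definition of successor/predecessor, together with observation (O1) that nested intervals from different cumulative search regions satisfy $\C\subseteq\C'$.
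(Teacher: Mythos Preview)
Your proof is correct and follows essentially the same length-counting contradiction as the paper: two interior-disjoint level-$k$ predecessors inside $\C_j^{i+1}$ would force $\ints(\C_j^{i+1})\ge 2(\appr)^k\,\wt(M_{\opt})/n$, which exceeds the level-$k$ upper bound since $\appr<2$. The only difference is that you make the arithmetic $2\le\appr$ step explicit, whereas the paper leaves it implicit.
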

\begin{proof}
For the sake of contradiction, suppose there are at least two  interior-disjoint level $k$ intervals $\C$ and $\C'$ . Since  $\C_{j}^{i+1}$ contains both $\C$ and $\C'$, its length is at least $2(\appr)^k (\wt(M_{\opt})/n)$ contradicting the fact that $\C_{j}^{i+1}$ is  a level $k$ interval.   
\end{proof}
 Recollect
that $M_{\C_j^{i+1}}$ and $\ms_{\C_j^{i+1}}$ are edges of the online and offline matching  inside $\C_j^{i+1}$that match the servers of $S_{\C_j^{i+1}}$ to requests of $R_{\C_j^{i+1}}$.  Let  $M_{\Gamma_{i+1}} = \bigcup_{\C \in \Gamma_{i+1}}M_{\C}$ be the edges of the  online matching contained inside the predecessors $\Gamma_{i+1}$ of $\C_{j}^{i+1}$.  By construction,  the matchings $M_{\C_j^{i+1}}$ and  $M_{\Gamma_{i+1}}$ only differ in the edge $(s_{i+1},r_{i+1})$.  So, $M_{\Gamma_{i+1}}$ match servers in $S_{\C^{i+1}_j}\setminus \{s_{i+1}\}$ to requests in $R_{\C_j^{i+1}}\setminus\{r_{i+1}\}$.

\subparagraph*{Maximal and minimal level $k$ interval:} A level $k$ interval $\C$ is \emph{maximal} if 
\begin{itemize}
\item $\C$ is an interval in the cumulative search region  after all the $n$ requests have been processed, i.e., $\C \in\sigma_n$ or,
\item if  the successor  $\C'$  of $\C$ has  $\level(\C) >k$.
\end{itemize}
 A level $k$ interval $\C_j^{i+1}$ is a \emph{minimal level $k$ interval} if none of its predecessors in $\Gamma_{i+1}$ are of level $k$.

  Next, we will describe a sequence of nested level $k$ intervals that  capture the evolution of a minimal level $k$ interval into a maximal level $k$ interval. For any level $k$ interval $\C$, we define a sequence of level $k$ intervals $\evol_{\C }$ along with a set $\comp(\C)$  of intervals of  level strictly less than $k$ in  a recursive way as follows: Initially $\evol_{\C}=\emptyset$ and $\comp(\C)=\emptyset$.   Suppose phase $i$ is the birth phase of $\C$. If all  the intervals in the predecessor  $\Gamma_{i}$  have a level less than $k$, then $\C$ is a minimal level $k$ interval. We add $\C$ to to front of the sequence $\evol_{\C}$ and add the predecessors of $\C$, $\Gamma_i$, to the set  $\comp(\C)$ and return $\evol_{\C}$ and $\comp(\C)$. Otherwise, from Lemma~\ref{lem:onelevelk}, there  is exactly one level $k$ interval $\C' \in \Gamma_i$ and all other intervals have a level strictly less than $k$. In this case, we compute  $\evol_{\C'}$ and $\comp(\C')$ recursively. We set $\evol_{\C}$ to the sequence $\evol_{\C'}$ followed by $\C$. We add all intervals in $\Gamma_i \setminus \C'$ along with the intervals of $\comp(\C')$ to $\comp(\C)$ and return $\comp(\C)$ and $\evol_{\C}$.   

For any maximal level $k$ interval $\C$, consider the sequence $\evol_{\C}= \langle \C_1,\ldots, \C_l\rangle$ and the set of disjoint intervals $\comp(\C)$ . From the construction of $\evol_{\C}$, the following observations follow in a straight-forward way:  
\begin{enumerate}[label=(E\arabic*)]
\item All intervals $\C' \in \evol_{\C}$ are level $k$ intervals. The first and the last interval, i.e., $\C_1$ and $\C_l=\C$ in $\evol_{\C}$ are minimal and maximal level $k$ intervals respectively, 
\item The intervals in $\evol_{\C}$ are nested, i.e., $\C_{i} \subseteq \C_{i+1}$ for all $1\le i \le l-1$.
\end{enumerate}

Let $\Mk_{\C}$ be the set of all level $k$ edges of the online matching that are contained inside $\C$. Let $\S_{\C}$ and $\R_{\C}$ be the servers and request that are matched by $\Mk_{\C}$. From the construction of $\comp(\C)$, the following properties follow in a straight-forward way:
\begin{enumerate}[label=C\arabic*]
\item[(C1)] $\comp(\C)$ is a set of disjoint intervals each of which is contained inside $\C$,
\item[(C2)] Every point $s \in S_{\C}\setminus \S_{\C}$ and $r \in R_{\C}\setminus \S_{\C}$ is contained inside some interval from the set $\comp(\C)$. 
\end{enumerate}

Let $I=\{I_1,\ldots, I_{l_k}\}$ be the set of all maximal level $k$ intervals. In the following lemma, we show that  $I$ is a set of pairwise interior-disjoint intervals. Furthermore,   the matchings $\Mk_{I_1}, \Mk_{I_2}, \ldots, \Mk_{I_{l_{k}}}$ partitions  all the level $k$ edges of the online matching $M$.

\begin{lemma}
Let $I=\{I_1,\ldots, I_{l_k}\}$ be the set of all maximal level $k$ intervals. Then, $I$ is a set of  interior-disjoint intervals. Furthermore,   the matchings $\Mk_{I_1}, \Mk_{I_2}, \ldots, \Mk_{I_{l_{k}}}$ partitions  all the level $k$ edges of the online matching $M$. 
\end{lemma}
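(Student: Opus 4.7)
The plan is to prove the two claims separately: first, that the maximal level-$k$ intervals $I_1,\ldots,I_{l_k}$ are pairwise interior-disjoint, and second, that every level-$k$ edge of $M$ appears in exactly one $\Mk_{I_j}$.

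For interior-disjointness, I would argue by contradiction. Suppose two distinct maximal level-$k$ intervals $I_a$ and $I_b$ share an interior point. By (O1), whenever they co-exist in any cumulative search region they must nest or be disjoint, so sharing an interior point forces one to strictly contain the other; without loss of generality $I_a \supsetneq I_b$. The key step is to show that $I_a$ appears on the chain of successors of $I_b$. I trace the region of $I_b$ forward through phases: at each phase $j \ge b_b$, where $b_b$ is the birth phase of $I_b$, there is a unique interval of $\sigma_j$ whose interior covers the interior of $I_b$, for otherwise two distinct intervals of $\sigma_j$ would share interior points, violating interior-disjointness within a single $\sigma_j$. This produces a chain $I_b = \C_{b_b} \subseteq \C_{b_b+1} \subseteq \cdots$ in which consecutive entries are either equal or related as predecessor and successor. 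At $I_a$'s birth phase, the unique such interval in the corresponding $\sigma$ must be $I_a$ itself, so $I_a$ is an ancestor of $I_b$ in the successor tree.

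Along the successor chain from $I_b$ up to $I_a$, interval lengths are non-decreasing, and since level is a monotone function of length, levels are non-decreasing as well. Both endpoints have level $k$, so every interval on the chain has level exactly $k$; in particular, the immediate successor of $I_b$ has level $k$. But $I_b$ is maximal, so either $I_b \in \sigma_n$ or its successor has level strictly greater than $k$. Since $I_b$ has an ancestor $I_a \ne I_b$, it is not in $\sigma_n$; and we just showed its successor has level $k$, not greater. This contradicts maximality and establishes interior-disjointness.

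For the partition claim, fix any level-$k$ edge $(s_i,r_i) \in M$. By definition, $(s_i,r_i)$ lies inside some level-$k$ interval $\C' \in \sigma_i$. Walking up the successor chain from $\C'$, lengths and hence levels are non-decreasing, and every ancestor contains $\C'$ and therefore contains $(s_i,r_i)$. Stop at the first ancestor of level strictly greater than $k$, or at the topmost interval in $\sigma_n$ if no such ancestor exists; the last level-$k$ interval on this walk is a maximal level-$k$ interval $I_*$ containing $(s_i,r_i)$, so $(s_i,r_i) \in \Mk_{I_*}$. Combined with interior-disjointness from the first part, no edge of positive length lies in two distinct $\Mk_{I_j}$'s, yielding the claimed partition. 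I expect the main obstacle to be the ancestor step in the disjointness argument: applying (O1) carefully phase by phase to force $I_a$ onto $I_b$'s successor chain. Once that is in place, the level-monotonicity along the chain and the ``walk up to the last level-$k$ ancestor'' construction make the rest routine.
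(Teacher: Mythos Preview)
Your proof is correct and follows essentially the same approach as the paper: invoke (O1) to force nesting and derive a contradiction with maximality for the disjointness claim, and walk up the successor chain from the level-$k$ interval containing $(s_i,r_i)$ for the partition claim. Your disjointness argument is in fact more careful than the paper's, which simply asserts that $\C \subseteq \C'$ (both level $k$) implies $\C$ is not maximal, whereas you explicitly trace the successor chain and use level-monotonicity along it to justify that the immediate successor of $\C$ has level $k$.
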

\begin{proof}
Any two maximal level $k$ intervals $\C$ and $\C'$ will be pairwise-disjoint. If not, from (O1), without loss of generality, $\C$ will be contained inside $\C'$ implying that $\C$ is not a maximal level $k$ interval.

Consider any level $k$ edge $(s_i,r_i)$ and let the interval $\C^i_j$ contain the edge $(s_i,r_i)$.  Set $\C'' \leftarrow \C_{j}^i$. If the successor interval $C$ of $\C''$ is also level $k$, then we set $\C''\leftarrow C$ and repeat this step. Otherwise, if the successor interval $C$ is of a level higher than $k$ or if $C \in \sigma_n$, then we set  $\C \leftarrow \C''$ as the maximal level $k$ interval. By construction, $\C^{i}_j$ is in $\evol_{\C}$ and $(s_i,r_i) \in \Mk_{\C}$. Therefore, every level $k$ edge of the online matching is contained inside the maximal level $k$ interval $\C$.  \end{proof}

\begin{lemma}
\label{lem:wspconline} 
For any level $k\ge1$ interval $\C$, consider the sequence $\evol_{\C}$.   Let $\eps = \frac{1}{32t}$ and $t=3$. Then, $\R_{\C}$ and $\S_{\C}$ is an $\eps$-well separated input and the matching $\Mk_{\C}$ is an $\eps$-well-aligned matching of $\S_{\C}$ and $\R_{\C}$. Furthermore, each far edge in the $\eps$-well aligned matching $\Mk_{\C}$ is also a  long edge of the online matching. 
\end{lemma}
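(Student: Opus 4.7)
My plan is to identify $\Mk_{\C}$ explicitly, prove the structural fact that $s_{p_m}\in bd(\C_m)$ for every $m$, set coordinates adapted to $\C_1$, and then verify $\eps$-well-separatedness, $\eps$-well-alignment, and the far-edge-is-long property in turn. First, writing $p_m$ for the birth phase of $\C_m$ in $\evol_{\C}=\langle \C_1,\ldots,\C_l\rangle$ (with $\C_l=\C$), I claim $\Mk_{\C}=\{(s_{p_m},r_{p_m})\}_{m=1}^l$. Indeed, any level-$k$ online edge inside $\C$ has its phase-$i$ containing interval $\C^i_j$ also of level $k$, and by (O1) this interval is nested inside $\C$; the recursive construction of $\evol_{\C}$ then forces $\C^i_j$ to be some $\C_m$, so $i=p_m$. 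In particular $\R_{\C}=\{r_{p_m}\}_{m=1}^l$ and $\S_{\C}=\{s_{p_m}\}_{m=1}^l$.

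For the structural claim, Lemma~\ref{lem:connected} places $s_{p_m}$ on the boundary of the open interval $sr(r_{p_m})$. If $s_{p_m}$ were strictly interior to $\C_m$, some predecessor interval $\hat{\C}\in\sigma_{p_m-1}$ would have to overlap $sr(r_{p_m})$ and extend strictly past $s_{p_m}$, thereby placing $s_{p_m}$ in the interior of $\hat{\C}$. But $s_{p_m}$ is a free server at the start of phase $p_m$, contradicting Lemma~\ref{lem:csrprop}; hence $s_{p_m}\in bd(\C_m)$.

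Translate coordinates so that $\C_1=[0,\Delta]$ with $\Delta=\ints(\C_1)$. The level-$k$ bound $\ints(\C_m)\le (1+\eps)\Delta$ together with the nesting $\C_1\subseteq \C_m$ yields $\C_m=[a_m,b_m]$ with $a_m\in[-\eps\Delta,0]$ and $b_m\in[\Delta,(1+\eps)\Delta]$, since $(-a_m)+(b_m-\Delta)\le \eps\Delta$. Consequently $\R_{\C}\subseteq \C\subseteq [-\eps\Delta,(1+\eps)\Delta]=I_A$, and the boundary claim gives $s_{p_m}\in\{a_m,b_m\}\subseteq I_L\cup I_R$, which establishes $\eps$-well-separatedness. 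For well-alignment, any close edge $(s_{p_m},r_{p_m})$ with both endpoints in $I_L'$ has $s_{p_m}\in I_L'\cap(I_L\cup I_R)=I_L$, forcing $s_{p_m}=a_m$; since $r_{p_m}\in \C_m$ we get $r_{p_m}\ge a_m=s_{p_m}$, so it is a left edge, and the $I_R'$ case is symmetric.

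For the far-edge-is-long claim, take $(s_{p_m},r_{p_m})$ with $s_{p_m}\in I_L$ and $r_{p_m}\in I_R'$, so $s_{p_m}=a_m$ and both $r_{p_m},b_m\in[(1-\eps)\Delta,(1+\eps)\Delta]$, yielding $b_m-r_{p_m}\le 2\eps\Delta$. Because $cspan(r_{p_m},p_m)\subseteq \C_m$, the radius $\ym{p_m}(r_{p_m})/t$ is at most $b_m-r_{p_m}\le 2\eps\Delta$, so by invariant (I3), $\phi_t(P_{p_m})=y(r_{p_m})\le 2t\eps\Delta$. Meanwhile $\ell(P_{p_m})\ge |s_{p_m}-r_{p_m}|\ge (1-\eps)\Delta$, and substituting $\eps=1/(32t)$ with $t=3$ gives $\tfrac{4}{t-1}\phi_t(P_{p_m})\le \Delta/8<(1-\eps)\Delta\le \ell(P_{p_m})$, certifying $(s_{p_m},r_{p_m})$ as long. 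The main obstacle is the boundary claim $s_{p_m}\in bd(\C_m)$: geometrically natural, but requiring a careful case analysis of how $sr(r_{p_m})$ merges with predecessors in $\sigma_{p_m-1}$.
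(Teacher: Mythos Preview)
Your proof is correct and rests on the same three structural pillars as the paper's argument: the level-$k$ length bound together with the nesting (E2) forces every $\C_m$ into $[-\eps\Delta,(1+\eps)\Delta]$; the free-server/boundary property from Lemma~\ref{lem:csrprop} pins each $s_{p_m}$ to $I_L\cup I_R$ and gives well-alignment; and invariant (I3) ties $\phi_t(P_{p_m})$ to the span of $r_{p_m}$ for the far-edge claim. Two places differ in execution. For well-separatedness, the paper does not use the per-$m$ boundary claim $s_{p_m}\in bd(\C_m)$; instead it observes once that every $s_{p_m}\in S_F^{p_1}$ (since $p_m\ge p_1$), applies Lemma~\ref{lem:csrprop} at the single phase $p_1$ to get $s_{p_m}\notin (0,\Delta)$, and is done. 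Your route is slightly longer here but reuses a claim you need anyway for well-alignment. For the far-edge-is-long assertion, the paper argues by contradiction: assuming $(s_{p_m},r_{p_m})$ is short forces $\phi_t(P_{p_m})\ge \tfrac{t-1}{4}(1-2\eps)\Delta$, hence $span(r_{p_m},p_m)$ has radius at least $3\eps\Delta$, pushing $\C_m$ beyond $(1+\eps)\Delta$. You invert this: from $cspan(r_{p_m},p_m)\subseteq\overline{\C_m}$ you read off $\phi_t(P_{p_m})\le 2t\eps\Delta$ directly and then compare with $\ell(P_{p_m})\ge (1-\eps)\Delta$. Your direct bound is arguably cleaner and makes the role of $\eps=1/(32t)$ more transparent, while the paper's contradiction highlights that shortness would violate the level-$k$ containment itself; both are valid and equivalent in strength.
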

\begin{proof}
 Let $\C'$ be the minimal level $k$ interval of $\evol_{\C}$ (i.e., the first  interval in the sequence $\evol_{\C}$) and let $\Delta$ be the length of $\C'$, i.e., $\Delta=\ints(\C')$. Without loss of generality, we assume that $\C' $ is the open interval $ (0,\Delta)$. From (E2), every interval $\C_j$ in $\evol_{\C}$ contains $\C'$. Furthermore, $\C_j$ is inside $[-\eps\Delta, (1+\eps)\Delta]$. Suppose $\C_{j}$ is not contained inside $[-\eps\Delta, (1+\eps)\Delta]$, then the length $\ints(\C_j) \ge (1+\eps)^{k+1}(\wt(M_{\opt})/n)$. This contradicts the fact that $\C_j$ is level $k$ interval. 

 Let $i$ be the birth phase of $\C'$. From Lemma~\ref{lem:csrprop}, $s_i \in bd(\C')$ . From Lemma~\ref{lem:csrprop},  there are no servers of $S_F^i$ inside $\C'$. Since   $\S_{\C}\subseteq S_F^i$, all servers of $\S_{\C}$ are in $[-\eps\Delta, 0]$ and $[\Delta, (1+\eps)\Delta]$. From these facts, we conclude that $\R_{\C}$ and $\S_{\C}$ together form an $\eps$-well separated input. 

Next, we show that $\Mk_{\C}$ is an $\eps$-well-aligned matching of $\S_{\C}$ and $\R_{\C}$. It suffices if we show that any edge of $\Mk_{\C}$ contained in the interval $I_R'=[(1-\eps)\Delta, (1+\eps)\Delta]$ is a right edge. For the sake of contradiction, suppose there is a left edge $(s_{l},r_{l})$ in $I_R'$ Since $s_l \in [\Delta,(1+\eps)\Delta]$,  we have $\Delta< s_{l} < r_{l} \le (1+\eps)\Delta$.  $(s_{l},r_{l})$ is a level $k$ edge and therefore, there is a level $k$ interval $\tilde{\C}$ in $\evol_{\C}$ for which phase $l$ is the birth phase. By construction $\tilde{\C}$ contains both $[0,\Delta]$ and $r_{l}$ and so, $s_{l}$ is in the interior of $\tilde{\C}$. This contradicts the property (from  Lemma~\ref{lem:csrprop}) of $\sr_{l}$ that $s_l$ is on the boundary of $\tilde{\C}$. Therefore, $(s_l,r_l)$ is a right edge. A symmetric argument can be used to show that the edges of $\Mk_{\C}$ in the interval $I_{L}'=[-\eps\Delta, \eps\Delta]$ are left edges. Thus we conclude that $\Mk_{\C}$ is an $\eps$-well aligned matching of an $\eps$-well separated input.

Finally, consider any far edge $(s_l,r_l)$ of $\Mk_{\C}$, i.e., $(s_l,r_l) \in  I_{L}'\times I_R'$ (a symmetric argument works when $(s_l,r_l) \in I_R'\times I_L'$). For the sake of contradiction, suppose $(s_l,r_l)$ is a short edge in the online matching. We can bound the length of the edge $(s_l,r_l)$ from above by the length of the augmenting path $P_l$ computed in phase $l$ and we have  $\ell(P_l) \ge \ints(s_l,r_l) \ge (1-2\eps)\Delta$. From the definition of short edge, we have $\phi_t(P_l) \ge \frac{(t-1)}{4}\ell(P_l) \ge \frac{t-1}{4}(1-2\eps)\Delta$. For  our choice of $\eps=\frac{1}{32t}$ and $t=3$, we get $\phi_t(P_l)/t \ge 3\eps\Delta$ , the span $span(r_l, l) \supseteq [r_l-3\eps\Delta, r_l+3\eps\Delta]$. Since $r_l\in[(1-\eps)\Delta,(1+\eps)\Delta]$, it follows that the right end point of $span(r_l,l)$ is at least $(1+2\eps)\Delta$. Since span of $r_l$ is contained in the search interval $sr_{r_l}$ and also the cumulative search range $\sr_l$, the interval $\C'$ whose birth phase is $l$ contains the point $(1+2\eps)\Delta$. From this, it follows that $\C'$ is not inside $[-\eps\Delta, (1+\eps)\Delta]$ contradicting the fact that has already been established that every interval of $\evol_{\C}$ is inside $[-\eps\Delta, (1+\eps)\Delta]$. 
\end{proof}

\noindent

For any  level $k$ interval $\C$, let  $\Mk^{\opt}_{\C}$ be the optimal matching of $\S_{\C}$ and $\R_{\C}$. Suppose $\{I_1,\ldots, I_{l_k}\}$ be the set of all maximal level $k$ intervals.  Let $\ms_{\comp(I_{j})}$ be the union (over all intervals of $\comp(I_{j})$), the offline matching contained inside an interval of $\comp(I_{j})$. By (C2) and Lemma~\ref{lem:csrprop},  $\ms_{\comp(I_{j})}$ matches   servers in $S_{I_{j}}\setminus \S_{I_{j}}$ to requests $R_{I_{j}}\setminus \R_{I_{j}}$. The symmetric difference of $\bigcup_{I_j \in I}\ms_{\comp(I_j)}$and $\bigcup_{I_j \in I}\ms_{I_j}$consists of augmenting paths that match $\bigcup_{I_j \in I} \S_{I_j}$ to $\bigcup_{I_j\in I} \R_{I_j}$. Note that these are precisely the points that are matched by level $k$ online edges. From Lemma~\ref{lem:costbnd},  $w(\bigcup_{I_j \in I}\ms_{\comp(I_j)})\le t\wt(M_{\opt}) $and $\wt(\bigcup_{I_j \in I}\ms_{I_j})\le t\wt(M_{\opt})$ leading to the following lemma. 
\begin{lemma}
\label{lem:optcost}
For any $k > 0$, let $\{I_1\ldots, I_{l_{k}}\}$ be the set of all maximal level $k$ intervals. Then, $\sum_{j=1}^{l_k} \wt(\Mk^{\opt}_{I_j}) \le 2t\wt(M_{\opt})$.
\end{lemma}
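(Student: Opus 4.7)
My proof plan proceeds in three steps: first construct, inside each maximal level-$k$ interval $I_j$, a matching of $\S_{I_j}$ to $\R_{I_j}$ from the symmetric difference of two offline matchings; then bound the optimum $\wt(\Mk^{\opt}_{I_j})$ by the cost of this matching via optimality; and finally apply Lemma~\ref{lem:costbnd} twice to bound the two summands.

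\smallskip
\noindent\textbf{Step 1 (Symmetric difference is a union of alternating paths).} Fix $j$ and consider $\ms_{I_j}$ (a perfect matching of $S_{I_j}$ to $R_{I_j}$) together with $\ms_{\comp(I_j)}$, which, by (C2) and the fact that every interval of $\comp(I_j)$ is contained in $I_j$, is a perfect matching of $S_{I_j}\setminus \S_{I_j}$ to $R_{I_j}\setminus \R_{I_j}$. The symmetric difference $\ms_{I_j}\oplus\ms_{\comp(I_j)}$ therefore consists of vertex-disjoint paths and cycles whose \emph{only} free (degree-one) vertices are exactly the points of $\S_{I_j}\cup\R_{I_j}$. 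Each such path alternates edges of the two matchings and has one endpoint in $\S_{I_j}$ and the other in $\R_{I_j}$. Pairing up the two endpoints of each path, I obtain a matching $N_j$ of $\S_{I_j}$ and $\R_{I_j}$, and by the triangle inequality on the line
$$\wt(N_j) \le \wt(\ms_{I_j}\oplus\ms_{\comp(I_j)}) \le \wt(\ms_{I_j}) + \wt(\ms_{\comp(I_j)}).$$

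\smallskip
\noindent\textbf{Step 2 (Optimality).} Since $\Mk^{\opt}_{I_j}$ is by definition the minimum-cost matching of $\S_{I_j}$ and $\R_{I_j}$, we have $\wt(\Mk^{\opt}_{I_j}) \le \wt(N_j)$. Summing over all maximal level-$k$ intervals,
$$\sum_{j=1}^{l_k} \wt(\Mk^{\opt}_{I_j}) \;\le\; \sum_{j=1}^{l_k} \wt(\ms_{I_j}) \;+\; \sum_{j=1}^{l_k} \wt(\ms_{\comp(I_j)}).$$

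\smallskip
\noindent\textbf{Step 3 (Two applications of Lemma~\ref{lem:costbnd}).} The first sum is bounded directly: the maximal level-$k$ intervals $I_1,\dots,I_{l_k}$ are pairwise interior-disjoint (proved in the lemma immediately preceding Lemma~\ref{lem:wspconline}), so Lemma~\ref{lem:costbnd} yields $\sum_j \wt(\ms_{I_j}) \le t\,\wt(M_{\opt})$. For the second sum, I collect all the intervals appearing across $\bigcup_j \comp(I_j)$. Within each $\comp(I_j)$ these are interior-disjoint by (C1), and across different indices $j\ne j'$ every interval of $\comp(I_j)$ lies inside $I_j$ while every interval of $\comp(I_{j'})$ lies inside $I_{j'}$, so disjointness of the $I_j$'s propagates. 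Hence all intervals in $\bigcup_j \comp(I_j)$ are pairwise interior-disjoint, and a second application of Lemma~\ref{lem:costbnd} gives $\sum_j \wt(\ms_{\comp(I_j)}) \le t\,\wt(M_{\opt})$. Adding the two bounds yields $\sum_{j=1}^{l_k} \wt(\Mk^{\opt}_{I_j}) \le 2t\,\wt(M_{\opt})$, as desired.

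\smallskip
The one place that requires care is Step 1, where I rely on the claim that $\ms_{\comp(I_j)}$ perfectly matches $S_{I_j}\setminus \S_{I_j}$ to $R_{I_j}\setminus \R_{I_j}$. This needs (C2) together with Lemma~\ref{lem:csrprop} applied at the birth phase of each interval of $\comp(I_j)$, which guarantees that every edge of $\ms_{\comp(I_j)}$ is contained inside one of these sub-intervals and that the set of vertices covered by $\comp(I_j)$ is precisely $(S_{I_j}\cup R_{I_j})\setminus(\S_{I_j}\cup \R_{I_j})$; I would make this bookkeeping explicit before invoking the triangle-inequality argument.
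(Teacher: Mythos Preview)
Your proposal is correct and follows essentially the same route as the paper's own proof: form the symmetric difference $\ms_{I_j}\oplus\ms_{\comp(I_j)}$, extract from it a matching of $\S_{I_j}$ to $\R_{I_j}$ whose cost is at most $\wt(\ms_{I_j})+\wt(\ms_{\comp(I_j)})$ by the triangle inequality, use optimality of $\Mk^{\opt}_{I_j}$, and then apply Lemma~\ref{lem:costbnd} once to the pairwise interior-disjoint family $\{I_j\}$ and once to the pairwise interior-disjoint family $\bigcup_j\comp(I_j)$. Your added bookkeeping about why $\ms_{\comp(I_j)}$ perfectly matches $S_{I_j}\setminus\S_{I_j}$ to $R_{I_j}\setminus\R_{I_j}$ and why the intervals in $\bigcup_j\comp(I_j)$ are mutually disjoint matches exactly what the paper uses (via (C1), (C2), and Lemma~\ref{lem:csrprop}).
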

\begin{proof}
Consider any interval $I_j$. Since $\ms_{I_j}$ is a perfect matching of $S_{I_j}$ and $R_{I_j}$ where as $\ms_{\comp(I_j)}$ is a perfect matching  of $S_{I_j}\setminus \S_{I_j}$ with vertices of $R_{I_{j}} \setminus \R_{I_j}$. Therefore, $\ms_{I_j} \oplus \ms_{\comp(I_j)}$ consists of exactly  $p=|\S_{I_j}| = |\R_{I_j}|$ vertex disjoint augmenting paths $\EuScript{P}=\{P^1,\ldots, P^p\}$ with respect to the matching $\ms_{\comp(I_j)}$. Each $P^i$  goes from some vertex in  $s \in \S_{I_j}$ to a vertex   $r\in \R_{I_j}$. Therefore, the set of augmenting paths induce a matching of $\S_{I_j}$ and $\R_{I_j}$.  Let $M'$ be this matching. From the metric property, $|s-r| \le \ell(P^i)$.  Therefore, the cost of $M'$
$$\wt(\Mk^{\opt}_{I_j}) \le \wt(M') = \sum_{(s,r) \in M'} | s-r| \le \sum_{P^i \in \EuScript{P}} \ell(P^i) \le \wt(\ms_{I_j}) + \wt(\ms_{\comp(I_j)}).$$ Adding the above over all maximal level $k$ intervals, 
\begin{eqnarray}\label{eq:costrel1}\sum_{j=1}^{l_k}\wt(\Mk^{\opt}_{I_j}) &\le& \sum_{j=1}^{l_k}\wt(\ms_{I_j}) + \sum_{j=1}^{l_k}\wt(\ms_{\comp(I_j)}).\end{eqnarray}    Since all the maximal level $k$ intervals are pairwise interior-disjoint, from Lemma~\ref{lem:costbnd}, we have  
\begin{eqnarray}\label{eq:costrel2}\sum_{j=1}^{l_k}\wt(\ms_{I_j})&\le& t\wt(M_{\opt}).\end{eqnarray} Since all intervals in $\comp(I_j)$  are interior-disjoint and contained inside $I_j$, all the intervals in $\bigcup_{j=1}^{l_{k}} \comp(I_j)$ are pairwise interior-disjoint and from Lemma~\ref{lem:costbnd},  \begin{eqnarray}\label{eq:costrel3}\sum_{j=1}^{l_k}\wt(\ms_{\comp(I_j)}) \le t\wt(M_{\opt}).\end{eqnarray} The lemma follows by combining~\eqref{eq:costrel1},\eqref{eq:costrel2} and~\eqref{eq:costrel3}.
\end{proof}
For $k > 0$, consider any maximal level $k$  interval $I_j$. In Lemma~\ref{lem:wspconline}, for $\eps > 1/32t$ and $t=3$, we show that the matching $\Mk_{I_j}$ is an $\eps$-well-aligned matching of an $\eps$-well-separated input instance. Additionally, we also show in this lemma  that every far edge of this well-aligned matching is also a long edge of the online matching. 
Let $\Mk_{I_j}^\mathrm{far}$, $\Mk_{I_j}^\mathrm{close}$ and $\Mk_{I_j}^\mathrm{med}$ denote the far,  close and medium edges of $\Mk_{I_j}$. Recollect that  $\Mk_{I_j}^\mathrm{far}\cup \Mk_{I_j}^\mathrm{close} \cup \Mk_{I_j}^\mathrm{med} = \Mk_{I_j}$. From Lemma~\ref{lem:wspc}, we know that 
$$\wt(\Mk_{I_j}^{\mathrm{close}})+\wt(\Mk_{I_j}^{\mathrm{med}}) \le (2/\eps + 3)\wt( \Mk_{I_j}^{\opt}) + \frac{4\eps}{1-2\eps} \wt(\Mk_{I_j}^{\mathrm{far}}).$$ 

Adding the above inequality over all level $k$ intervals and setting $\eps=\frac{1}{32t}$ and $t=3$, we get

$$\sum_{j=1}^{l_k}(\wt(\Mk_{I_j}^{\mathrm{close}})+\wt(\Mk_{I_j}^{\mathrm{med}})) \le (2/\eps + 3)\sum_{j=1}^{l_k}\wt( \Mk_{I_j}^{\opt}) + \frac{4\eps}{1-2\eps} \sum_{j=1}^{l_k}\wt(\Mk_{I_j}^{\mathrm{far}}).$$
Adding the above equation for all levels $k >0$ and adding $\wt(M_{\opt})$ to the RHS and LHS, we get 

\begin{equation*}\wt(M_{\opt})+\sum_{k=1}^{O(\log n)}\sum_{j=1}^{l_k}(\wt(\Mk_{I_j}^{\mathrm{close}})+\wt(\Mk_{I_j}^{\mathrm{med}})) \le\end{equation*}
\begin{equation}\label{eq:longshort1} \wt(M_{\opt})+(\frac{2}{\eps}+3) \sum_{j=1}^{l_k}\wt( \Mk_{I_j}^{\opt}) + \frac{2}{47} \sum_{k=1}^{O(\log n)}\sum_{j=1}^{l_k}\wt(\Mk_{I_j}^{\mathrm{far}}).\end{equation}  Note that every short edge of the online matching with level $k > 0$ will appear, for some maximal level $k$ interval $I_j \in \{I_1,\ldots, I_{l_k}\}$, in $\Mk_{I_j}^{\mathrm{close}}$ or $\Mk_{I_j}^{\mathrm{med}}$. By construction, the short edges of level $0$ have a length of at most $\wt(M_{\opt})/n$ and there are at most $n$ such edges. Recollect that $M_H$ is the set of short online edges. Therefore, 
$$\wt(M_H) \le \wt(M_{\opt})+\sum_{k=1}^{O(\log n)}\sum_{j=1}^{l_k}(\wt(\Mk_{I_j}^{\mathrm{close}})+\wt(\Mk_{I_j}^{\mathrm{med}})). $$
Every edge in $\Mk_{I_j}^{\mathrm{far}}$ is only a long edge of the online matching. Recollect that $M_L$ is the set of long edges of the online matching. We can, therefore, rewrite the~\eqref{eq:longshort1} as $$\wt(M_H) \le O(1/\eps)\sum_{k=1}^{O(\log n)}\sum_{j=1}^{l_k}\wt( \Mk_{I_j}^{\opt}) + \frac{2}{47} \ \wt(M_L),$$ 
  \begin{eqnarray*}
\wt(M_H) &\le& O(\log n) \wt(M_{\opt}) + \frac{2}{47}\wt(M_L).
\label{eq:count} 
\end{eqnarray*}

The above inequality follows from  Lemma~\ref{lem:optcost} and the fact that every online edge appears in exactly one of the maximal level $k$ intervals for some level $k$.  In Lemma~\ref{lem:shortcost},  by setting $t=3$, we get $\wt(M_H) \ge \wt(M)/6$ and $\wt(M_L) \le (5/6)\wt(M) $, and 
\begin{eqnarray*}
\wt(M_H) - \frac{2}{47}\wt(M_L) & \le & O(\log n)\wt(M_{\opt}),\\
\frac{\wt(M)}{6} - \frac{2}{47}\times \frac{5}{6}\wt(M) &\le & O(\log n)\wt(M_{\opt}),\\ \frac{\wt(M)}{\wt{(M_{\opt}})} &\le& O(\log n).\\
\end{eqnarray*}



\bibliography{onlinematch}
\ignore{\appendix
\section{Proof of Lemma~\ref{lem:shortcost}}
 Since the matchings $\ms_i$ and $\ms_{i-1}$ differ only in the edges of the augmenting path $P_i$, we have
 \begin{eqnarray}
 \wt(\ms_i) - \wt(\ms_{i-1}) &=& \sum_{(s,r) \in P_i \setminus
 \ms_{i-1}}\dist(s,r) - \sum_{(s,r) \in P_i\cap \ms_{i-1}}\dist(s,r)
 \label{eq:dist}\\
 &=& \phi_t(P_i) - \left((t-1)\sum_{(s,r) \in P_i \setminus
 \ms_{i-1}}\dist(s,r)\right)\nonumber\\
 &=& \phi_t(P_i) - \left(\frac{t-1}{2} \sum_{(s,r) \in P_i \setminus
 \ms_{i-1}}\dist(s,r) + \frac{t-1}{2} \sum_{(s,r) \in P_i \setminus
 \ms_{i-1}}\dist(s,r)\right)\nonumber
 \end{eqnarray}

 The second equality follows from the definition of $\phi_t(\cdot)$.
Adding and subtracting $\displaystyle (\frac{t-1}{2}) \sum_{(s,r) \in
 P_i\cap \ms_{i-1}}\dist(s,r)$ to the RHS we get,

 \begin{eqnarray*}
 \wt(\ms_i) - \wt(\ms_{i-1})
 &=& \phi_t(P_i) - \frac{t-1}{2} \left(\sum_{(s,r) \in P_i \setminus
 \ms_{i-1}}\dist(s,r) +\sum_{(s,r) \in P_i\cap
 \ms_{i-1}}\dist(s,r)\right)\\
 & &  - \frac{t-1}{2} \left(\sum_{(s,r) \in P_i \setminus
 \ms_{i-1}}\dist(s,r) - \sum_{(s,r) \in P_i\cap
 \ms_{i-1}}\dist(s,r)\right)\\
 &=& \phi_t(P_i) - \frac{t-1}{2}(\sum_{(s,r) \in P_i}\dist(s,r)) -
 \frac{t-1}{2} ( \wt(\ms_i) - \wt(\ms_{i-1}))
 \end{eqnarray*}

 The last equality follows from~\eqref{eq:dist}. Rearranging terms and
 setting $\sum_{(s,r) \in P_i} \dist(s,r) = \ell(P_i)$, we get,

 \begin{eqnarray}
 \frac{t+1}{2} (\wt(\ms_i) - \wt(\ms_{i-1})) &=&  \phi_t(P_i) -
 \frac{t-1}{2}\ell(P_i), \nonumber\\
 \frac{t+1}{2} \sum_{i=1}^n(\wt(\ms_i) - \wt(\ms_{i-1}))&=&
 \sum_{i=1}^n\phi_t(P_i) - \frac{t-1}{2}\sum_{i=1}^n \ell(P_i)\nonumber\\
 0 &\le& \sum_{i=1}^n\phi_t(P_i) - \frac{t-1}{2}\sum_{i=1}^n\ell(P_i)\nonumber\\
 \sum_{i=1}^n\phi_t(P_i) &\ge& \frac{t-1}{2}\sum_{i=1}^n\ell(P_i)\label{eq:costlength}
 \end{eqnarray}
 In the second to last equation, the summation on the LHS telescopes
 canceling all terms except $\wt(\ms_n)- \wt(\ms_0)$. Since $\ms_n=\ms$ and
 $\ms_0$ is an empty matching, we get $\wt(\ms_n)- \wt(\ms_0)=\wt(\ms)$. As $\wt(\ms)$ is always a positive value, the second to last equation follows.

 Recollect that $H$ is the set of short augmenting paths and $L$ is the set of long augmenting paths with $\mathbb{P}=L\cup H$.  We rewrite (\ref{eq:costlength})
$$
 \sum_{P_i\in H}\phi_t(P_i) + \sum_{P_i\in L}\phi_t(P_i) \ge \frac{t-1}{2}\sum_{P_i\in H} \ell(P_i) + \frac{t-1}{2}\sum_{P_i\in L} \ell(P_i).$$

\begin{eqnarray}
 \sum_{P_i\in H}\phi_t(P_i) + \sum_{P_i\in L}\phi_t(P_i) &\ge& \frac{t-1}{2}\sum_{P_i\in H} \ell(P_i) +2 \sum_{P_i\in L}\phi_t(P_i)\nonumber\\
 \sum_{P_i\in H}\phi_t(P_i) &\ge& \sum_{P_i\in L}\phi_t(P_i)\label{eq:longshort}
 \end{eqnarray}
 
 The last two inequalities follow from the fact that  $\frac{t-1}{2}\sum_{P_i\in H}\ell(P_i)$ is a positive term and also the  definition of long paths, i.e., if $P_i$ is a long path then $\phi_t(P_i) \le \frac{t-1}{4}\ell(P_i)$. Adding $\sum_{P_i\in H}\phi_t(P_i)$ to~\eqref{eq:longshort} and applying~\eqref{eq:costlength}, we get 
 \begin{equation}
 \label{eq:shortcost1}
 2\sum_{P_i \in H}\phi_t(P_i) \ge \sum_{P_i\in L\cup H}\phi_t(P_i)\ge \frac{t-1}{2}\sum_{i=1}^n \ell(P_i) \ge \frac{t-1}{2}\wt(M)\end{equation}

When request $r_i$ arrives, the edge $P'=(r_i,s_i)$ is an augmenting path of length $1$ with respect to $\ms_{i-1}$and has a $t$-net-cost $\phi_{t}(P') = t\dist(r_i,s_i)$. Since $P_i$ is the minimum $t$-net-cost path, we have
$\phi_t(P_i) \le t\dist(s_i,r_i)$.
Therefore, we can write~\eqref{eq:shortcost1} can be written as
\begin{eqnarray*}
2t\wt(M_H) = 2t\sum_{P_i\in H} \dist(s_i,r_i) \ge 2\sum_{P_i\in H}\phi_t(P_i) \ge \frac{t-1}{2}\wt(M),
\end{eqnarray*}
or $(4+\frac{4}{t-1})\wt(M_H)\ge \wt(M)$ as desired.

\section{Proof of Lemma~\ref{lem:wspc}}
Let $M_{\mathrm{close}}$ and $M_{\mathrm{far}}$, be the edges of $M$ that are in $I_{L}' \times I_{L}'$ (or $I_R'\times I_R'$) and $I_{R}'\times I_{L}'$ respectively and $M_{\mathrm{med}}$ are the remaining edges of $M$. Let $S_{\mathrm{close}}$ and $R_{\mathrm{close}}$ be the servers and requests that participate in $M_{\mathrm{close}}$. Similarly, we define the sets $S_{\mathrm{far}}$ and $R_{\mathrm{far}}$ for $M_{\mathrm{far}}$ and the sets $S_{\mathrm{med}}$ and $R_{\mathrm{med}}$ for $M_{\mathrm{med}}$. Let $M_{\mathrm{close}}^{\opt}$ denote the optimal matching of $S_{\mathrm{close}}$ and $R_{\mathrm{close}}$ and let $M_{\mathrm{cf}}^{\opt}$ denote the optimal matching of $S_{\mathrm{far}}\cup S_{\mathrm{close}}$ with $R_{\mathrm{far}}\cup S_{\mathrm{close}}$. We make the following claims:
\begin{enumerate}
\item $\wt(M_{\mathrm{close}}^{\opt})=\wt(M_{\mathrm{close}})$, i.e., $M_{\mathrm{close}}$ is an optimal matching of $S_{\mathrm{close}}$ and $R_{\mathrm{close}}$,
\item $\wt(M_{\mathrm{med}}) \le (1/\eps)\wt(M_{\opt})$,
\item $\wt(M_{\mathrm{cf}}^{\opt}) \le O(1/\eps)\wt(M_{\opt})$,
and,

\item $\wt(M_{\mathrm{close}}) - 4\eps\Delta|M_{\mathrm{far}}| \le \wt(M_{\mathrm{cf}}^{\opt})$

\end{enumerate} 
The matching $M_{\mathrm{close}}$ is a perfect matching of servers $S_{\mathrm{close}}$ and requests $R_{\mathrm{close}}$. Note that no edges of $M_{\mathrm{close}}$ cross the interval $[\eps\Delta, (1-\eps)\Delta]$. The edges of $M_{\mathrm{close}}$ that are inside the interval  $[-\eps\Delta, \eps\Delta]$ are left edges and the edges of $M_{\mathrm{close}}$ that are inside $[(1-\eps)\Delta, (1+\eps)\Delta]$ are right edges. Therefore, $M_{\mathrm{close}}$ satisfies the precondition for (OPT) and so, $M_{\mathrm{close}}$ is an optimal matching of $S_{\mathrm{close}}$ and $R_{\mathrm{close}}$ implying (1).

To prove (2), observe that any edge $(s,r)$ in $M_{\mathrm{med}}$ has the request $r$ inside the interval $[\eps\Delta, (1-\eps)\Delta]$. Therefore, the maximum length of any such edge is at most $\Delta$. On the other hand, consider the edge $(s',r)$ of the optimal matching  $M_{\opt}$ . From the well-separated property, $s' \in [-\eps\Delta, 0]\cup [\Delta, (1+\eps)\Delta]$ and since $r \in [\eps\Delta, (1-\eps)\Delta]$, $\eps\Delta$ is a lower bound the length of $(s',r)$. Therefore, the cost of all the edges in $M_{\mathrm{med}}$ is bounded $(1/\eps) \wt(M_{\opt})$.\  

We prove (3) as follows: Let $M_{\opt}$ be the optimal matching of $S$ and $R$. Initially set $M_{\mathrm{tmp}}$ to $M_{\opt}$. For every $(s,r) \in M_{\mathrm{med}}$ , we remove $s$ and $r$ and the edges incident on them in $M_{\mathrm{tmp}}$. This creates at least $|M_{\mathrm{med}}|$ vertices in $S_{\mathrm{close}}\cup S_{\mathrm{far}}$ that are free with respect to $M_{\mathrm{tmp}}$. Similarly there are $|M_{\mathrm{med}}|$ free vertices in $R_{\mathrm{close}}\cup R_{\mathrm{far}}$ with respect to $M_{\mathrm{tmp}}$ .  We match these free nodes arbitrarily at a cost of at most $(1+2\eps)\Delta$. Therefore, the total cost of the matching $M_{\mathrm{tmp}}$ is at most $\wt(M_{\opt}) + |M_{\mathrm{med}}|(1+2\eps)\Delta$. For every $r \in R_{\mathrm{med}}$ the edge of $M_{\opt}$ incident on $r$ has a cost of at least $\eps\Delta$. Therefore, the cost of $M_{\opt}$ is at least $|M_{\mathrm{med}}|\eps\Delta$.  Combined, the new matching $M_{\mathrm{tmp}}$ matches $S_{\mathrm{close}}\cup S_{\mathrm{far}}$ to $R_{\mathrm{close}}\cup R_{\mathrm{far}}$ and has a cost at most $(1/\eps +3)\wt(M_{\opt})$ leading to (3).  

To prove (4),   consider the optimal matching $M_{\mathrm{close}}$ of servers$S_{\mathrm{close}}$ and requests $R_{\mathrm{close}}$. Let $k^j$ be the  point with the largest coordinate less than $\eps\Delta$ in the sequence $\sigma(S_{\mathrm{close}}\cup R_{\mathrm{close}}\cup S_{\mathrm{far}}\cup R_{\mathrm{far}})$. It is easy to see that $\kappa^j=[k^j,k^{j+1}]$ contains the interval $[\eps\Delta, (1-\eps)\Delta]$ and $\mathrm{diff}((K_j)\cap (S_{\mathrm{close}}\cup R_{\mathrm{close}}))$ is $0$.  Introducing $S_{\mathrm{far}}$ and $R_{\mathrm{far}}$ can reduce this difference for any interval by at most $|M_{\mathrm{far}}|$ and also reduce the cost of the optimal matching by $|M_{\mathrm{far}}|$ times the length of the interval. However, for interval $\kappa^j$, $\mathrm{diff}(K^j)$ cannot be smaller than $\mathrm{diff}(K_j \cap (S_{\mathrm{close}} \cup R_{\mathrm{close}}))$ which was already zero. So, in the worst case,  every interval inside $[-\eps\Delta, \eps\Delta]$ and $[(1-\eps)\Delta, (1+\eps)\Delta]$ can contribute to a reduction in the cost of the optimal matching by $|M_{\mathrm{far}}|$ times the length of the interval. Since, the total length of these intervals cannot exceed $4\eps\Delta$,  we get 

$$\wt(M_{\mathrm{close}}) - 4\eps\Delta|M_{\mathrm{far}}| \le \wt(M_{\mathrm{cf}}^{\opt}).$$   Since the length of every edge in $M_{\mathrm{far}}$ is at least $(1-2\eps)\Delta$, we can rewrite the above equation as 
$$\wt(M_{\mathrm{close}}) - \frac{4\eps}{1-2\eps}\wt(M_{\mathrm{far}}) \le \wt(M_{\mathrm{close}}) - \frac{4\eps}{1-2\eps}(1-2\eps)\Delta|M_{\mathrm{far}}| \le \wt(M_{\mathrm{cf}}^{\opt}).$$
The proof of this lemma follows from combining the above equation with (1), (2) and (3).

\section{Proof of Lemma~\ref{lem:costbnd}}

Let $S(i,j)$ and $R(i,j)$ be the servers and requests belonging to the interval $\C(i,j)$.  Let $y^i(\cdot)$ be the dual weights of vertices in $S\cup R$\ at the end of phase $i$. To prove the lemma, we will assign a dual weight $y(\cdot)$ for every vertex such that  this assignment along with the matching $\M=\bigcup_{l=1}^k\ms_{\C(i_l,j_l)}$ is a $t$-feasible matching satisfying conditions (\ref{eq:feas1}) and (\ref{eq:feas2}).  The dual assignment is made as follows
 \begin{enumerate}
\item 
Every server belonging to $s \in S \setminus \bigcup_{l=1}^k S(i_l,j_l)$ and every request belonging to $r\in R \setminus \bigcup_{l=1}^k R(i_l,j_l)$ is assigned a dual weight of $0$, i.e., $y(r)\leftarrow 0, y(s) \leftarrow 0$. 
\item For each $i,j$, every server $s \in S(i,j) $ and every request $r\in R(i,j)$ are assigned its dual weight after phase $i$, i.e., $y(s) \leftarrow y^i(s)$, $y(r)\leftarrow y^i(r)$.
\end{enumerate} 
The matching $\M$ along with this dual assignment\ $y(\cdot)$ for vertices form a dual feasible matching. Notice that, for every edge   $(s,r)\in \M$, the edge $(s,r)$ will be contained in one of the components;  let $\C(i,j)$ be this component. Since both the matching edge $(s,r)$ is contained in the same component $j$\ after phase $i$, from our dual assignment, we have $y(s)=y^i(s)$ and $y(r)=y^i(r)$.  Since $(s,r)$ belongs to the offline matching at the end of phase $i$, i.e., $\ms_{\C(i,j)} \subseteq \ms_i$ and since $\ms_i$ and $y^i(\cdot)$ form a $t$-feasible matching (from (I1)), we have, $y(s)+y(r) = y^i(s)+y^i(r) = \dist(s,r)$. Therefore, every edge in the matching satisfies (\ref{eq:feas2}). For any edge $(s,r)$ that is not in $\M$,   let $r \in R(i,j)$. Then, there are two possibilities. Either $s \in S(i,j)$ or $s \notin S(i,j)$. We consider these cases separately. First, $s\in S(i,j)$. In this case, we can simply argue that both $r$ and $s$  have a dual weight $y^i(s)$ and $y^i(r)$ respectively. Since the dual weights after phase $i$ were $t$-feasible, from (\ref{eq:feas1}), we have, $y(s)+y(r) = y^i(s)+y^i(r) \le t\dist(s,r)$. The second possibility is that $s \not\in S(i,j)$. Let $s\in S(i',j')$.  In this case, due to disjointness of $\C(i,j) $ and $\C(i',j')$, the server $s$ will not be contained in the span of $r$, i.e., $s\notin span(r,i)$. This implies, $y(r) =y^i(r) \le \ym{i}(r)$. Since $s$ is not in $span(r,i)$ (which is a ball of radius $\ym{i}(r)$, we have $\dist(s,r) \ge \ym{i}(r)/t$. Therefore,  $y(r)=y^i(r) \le \ym{i}(r)\le t\dist(s,r)$. From (I2), we have that $y^{i'}(s)\le 0$. Therefore, $y(r)+y(s) \le t\dist(s,r)$, satisfying (\ref{eq:feas1}).  

Since, $\M$ is a $t$-feasible matching, next, we show that the cost of $\M$ is no more than $t\wt(M_{\opt})$. By our dual assignment, every unmatched server and request with respect to the matching $\M$ has a dual weight of $0$. For any edge $(s,r)\in \M$, the sum of the dual weights of $s$ and $r$ is exactly equal to $\dist(s,r)$ (due to $t$-feasibility of dual weights $y(\cdot)$ and the matching $\M$). Therefore, the sum of all the dual weights is exactly equal to the cost of $\M$, i.e., $\sum_{v \in S\cup R} y(v) = \wt(\M)$. Next, consider the edges of the optimal matching $M_{\opt}$. For each such edge of $M_\opt$, since the dual weights $y(\cdot)$ satisfies the $t$-feasibility condition $y(s)+y(r) \le t\dist(s,r)$, we have that the sum of the dual weights is $\sum_{v \in S\cup R} y(v) \le t\wt(M_{\opt}$). From this, we deduce that $\wt(\M)\le t\wt(M_{\opt})$ as desired.

\section{Description of the first sub-phase of the RM-Algorithm}
For the sake of completion, we present a precise description of the first sub-phase of phase $i$ next. This description is identical to the Hungarian search when $t=1$.   Consider a directed (residual) graph $\dir{G}_{\ms}$, where for any edge $(s,r) \in S\times R$, there is an edge directed from $r$ to $s$ with a weight of $t\dist(s,r) - y(s) - y(r)$  if $(s,r) \not\in \ms$ . Otherwise, if $(s,r) \in \ms$, then we add an edge directed from $s$ to $r$ with a weight $0$. Note that the weight of every edge in $\dir{G}_{\ms}$ is non-negative. Given this weighted directed graph, the first sub-phase simply executes Dijkstra's algorithm from $r_i$ and computes the shortest distances to every vertex in this graph; let $\ell_v$ denote the length of the shortest path from $r_i$ to $v$ in $\dir{G}_{\ms}$. Next, let $\ell = \min_{v\in S_i^F} \ell_v$ and $v = \arg\min_{v \in S_i^F} \ell_v$. If there are many vertices with the same shortest path distance of $\ell$, we choose  $v$ to be the vertex that has smallest number of edges in the shortest path from $r_i$. We set the shortest path from $s$ to $v$ as $\dir{P}_i$ and the corresponding path in $G(S,R)$ as the augmenting path $P_i$. For every vertex $v \in S\cup R$, we update its dual weight as follows: if $\ell_v < \ell$ and $v \in R$, then we increase the dual weight of this request by setting $y(v) \leftarrow y(v) + (\ell - \ell_v)$. Otherwise, if $\ell_v < \ell$ and $v \in S$, then we reduce the dual weight of the server by setting $y(v) \leftarrow y(v) - (\ell - \ell_v)$. With the updated dual weight, it can be shown that $P_i$ is an augmenting path in the eligible graph. This completes the description of the first sub-phase of phase $i$ of the algorithm.\           

\section{Proof of Lemma~\ref{lem:wspconline}}

From (E2), every interval $\C_j$ in $\evol_{\C}$ contains $\C'$. If $\C_{j}$ is not contained inside $[-\eps\Delta, (1+\eps)\Delta]$, then the length $\ints(\C_j) \ge (1+\eps)^{k+1}(\wt(M_{\opt})/n)$. This contradicts the fact that $\C_j$ is level $k$ interval. 

 Let $i$ be the birth phase of $\C'$. From Lemma~\ref{lem:csrprop}, $s_i \in bd(\C')$ . From Lemma~\ref{lem:csrprop},  there are no servers of $S_F^i$ inside $\C'$. Since   $\S_{\C}\subseteq S_F^i$, all servers of $\S_{\C}$ are in $[-\eps\Delta, 0]$ and $[\Delta, (1+\eps)\Delta]$. From these facts, we conclude that $\R_{\C}$ and $\S_{\C}$ together form an $\eps$-well separated input. 

Next, we show that $\Mk_{\C}$ is a well-aligned matching of $\S_{\C}$ and $\R_{\C}$. It suffices if we show that any edge of $\Mk_{\C}$ contained in the interval $I_R=[(1-\eps)\Delta, (1+\eps)\Delta]$ is a right edge. For the sake of contradiction, suppose there is a left edge $(s_{l},r_{l})$, i.e., $0< s_{l} < r_{l} \le (1+\eps)\Delta$. $(s_{l},r_{l})$ is a level $k$ edge and therefore, there is a level $k$ component $\tilde{\C}$ in $\evol_{\C}$ for which phase $l$ is the birth phase. By construction $\tilde{C}$ contains both $[0,\Delta]$ and $r_{l}$ and so, $s_{l}$ is in the interior of $\tilde{C}$ This contradicts the property (from  Lemma~\ref{lem:csrprop}) of $\sr_{l}$ that $s_l$ is on the boundary of $\tilde{\C}$. Therefore, $(s_l,r_l)$ is a right edge. A symmetric argument can be used to show that the edges of $\Mk_{\C}$ in the interval $[-\eps\Delta, \eps\Delta]$ are left edges.

Finally, consider any far edge $(s_l,r_l)$ of $\Mk_{\C}$, i.e., $(s_l,r_l) \in  [-\eps\Delta,\eps\Delta]\times[(1-\eps)\Delta,(1+\eps\Delta)]$. For the sake of contradiction, suppose $(s_l,r_l)$ is a short edge in the online matching. We can bound the length of the edge $(s_l,r_l)$ by the length of the augmenting path $P_l$ computed in phase $l$ and we have  $\ell(P_l) \ge (s_l,r_l) \ge (1-2\eps)\Delta$. From the definition of short edge, we have $\phi_t(P_l) \ge \frac{(t-1)}{4}\ell(P_l) \ge \frac{t-1}{4}(1-2\eps)\Delta$. Therefore $\phi_t(P_l)/t \ge 2\eps\Delta$ for $t =3$ and our choice of $\eps=\frac{1}{32t}$, the span $span(r_l, l) = [r-2\eps\Delta, r+2\eps\Delta]$. Since span of $r_l$ is contained in the search interval $sr_{r_l}$ and also the cumulative search range $\sr_l$, the interval $\C'$ whose birth phase is $l$ contains $[r-2\eps\Delta, r+2\eps\Delta]$. From this, it follows that $\C'$ is not inside $[-\eps\Delta, (1+\eps)\Delta]$ contradicting the fact that has already been established that every interval of $\evol_{\C}$ is inside $[-\eps\Delta, (1+\eps)\Delta]$.        

\section{Proof of Lemma~\ref{lem:optcost}}
Consider any interval $I_j$. Since $\ms_{I_j}$ is a perfect matching of $S_{I_j}$ and $R_{I_j}$ where as $\ms_{\comp(I_j)}$ is a perfect matching  of $S_{I_j}\setminus \S_{I_j}$ with vertices of $R \setminus \R_{I_j}$. Therefore, $\ms_{I_j} \oplus \ms_{\comp(I_j)}$ consists of exactly  $p=|\S_{I_j}| = |\R_{I_j}|$ vertex disjoint augmenting paths $\EuScript{P}=\{P^1,\ldots, P^p\}$ with respect to the matching $\ms_{\comp(I_j)}$. Each $P^i$  goes from some vertex in  $s \in \S_{I_j}$ to a vertex   $r\in \R_{I_j}$. Therefore, the set of augmenting paths induce a matching of $\S_{I_j}$ and $\R_{I_j}$.  Let $M'$ be this matching. From the metric property, $|s-r| \le \ell(P^i)$.  Therefore, the cost of $M'$
$$\wt(\Mk^{\opt}_{I_j}) \le \wt(M') = \sum_{(s,r) \in M'} \dist(s,r) \le \sum_{P^i \in \EuScript{P}} \ell(P^i) \le \wt(\ms_{I_j}) + \wt(\ms_{\comp(I_j)}).$$ Adding the above over all maximal level $k$ intervals, $$\sum_{j=1}^{l_k}\wt(\Mk^{\opt}_{I_j}) \le \sum_{j=1}^{l_k}\wt(\ms_{I_j}) + \sum_{j=1}^{l_k}\wt(\ms_{\comp(I_j)}).$$    Since all the maximal level $k$ intervals are pairwise disjoint, from Lemma~\ref{lem:costbnd}, we have  $$\sum_{j=1}^{l_k}\wt(\ms_{I_j})\le t\wt(M_{\opt}).$$ Similarly, all the intervals in $\bigcup_{j=1}^{l_{k}} \comp(I_j)$ are pairwise disjoint, from Lemma~\ref{lem:costbnd},  $$\sum_{j=1}^{l_k}\wt(\ms_{\comp(I_j)}) \le t\wt(M_{\opt}).$$. The lemma follows by combining the above equations.        

\section{Proof of Lemma~\ref{lem:included}}

An edge can be eligible if it is in $\ms_{i-1}$ or if it satisfies~\eqref{eq:elig1}. Suppose $(s,r) \notin \ms_{i-1}$ and satisfies (\ref{eq:elig1}). In this case,
\begin{eqnarray*}
y(s)+y(r)&=&t(|s-r|) \\
y(r) &\ge& t(|s-r|),
\end{eqnarray*} 
implying that $\ym{i}(r) \ge  t(|s-r|)$ and therefore, $s \in sr(r)$. 

For the case where $(s,r) \in \ms_{i-1}$, let $0<j< i$ be the largest index such that $(s,r) \in \ms_j$ and $(s,r)\not\in \ms_{j-1}$. Therefore, $(s,r)\in P_j\setminus \ms_{j-1}$. Since $P_j$ is an augmenting path with respect to $\ms_{j-1}$, every edge of $P_j\setminus \ms_{j-1}$ satisfies the eligibility condition (\ref{eq:elig2}) at the end of the first sub-phase of phase $j$ of the algorithm. For any vertex $v$, let $y'(v)$ be its dual weight after the end of the first sub-phase of phase $j$ of the algorithm. From (\ref{eq:elig2}), we have 
$y'(r) + y'(s) \le t\|sr\|$. Since $y'(s) \le 0$, we have $y'(r) \ge t\|sr\|$. By definition, $\ym{i}(r) \ge y'(r)$ and therefore $\ym{i}(r) \ge t\|sr\|$.

}
\end{document}